\newtheorem{thm}{Theorem}
\newtheorem{prop}[thm]{Proposition}
\newtheorem{lemma}{Lemma}
\numberwithin{equation}{section}
\def \C {\mathbb{C}}
\def \R {\mathbb{R}}
\def \Z {\mathbb{Z}}
\def \H_g so(N){H_g^{so(N)}}
\def \Hggl(N){H_g^{gl(N)}}
\newcommand{\nad}[2]{\genfrac{}{}{0pt}{}{#1}{#2}}
\def \p {\partial}
\renewcommand{\a}{\alpha}
\def\beq#1#2\eeq{%
        \begin{equation}%
        \label{#1}%
            #2%
        \end{equation}%
    }
\def\plaquette#1#2#3{
    \def\xa{#1} \def\ya{#2}
    \FPadd{\xb}{#3}{\xa} \FPadd{\yb}{#3}{\ya}
    \pnode(\xa,\ya){i}\pnode(\xa,\yb){j}\pnode(\xb,\ya){k}\pnode(\xb,\yb){l}
    \psdots[linecolor=blue](\xa,\ya)(\xb,\ya)(\xa,\yb)(\xb,\yb)
    \uput[-135](\xa,\ya){$i$}\uput[135](\xa,\yb){$j$}
    \uput[-45](\xb,\ya){$k$}\uput[45](\xb,\yb){$l$}
}
\def\fourdotsijkl#1#2#3{
  \def\xa{#1}\def\dx{#2}\def\x{\xa}
   \multido{\inode=1+1,\nd=\xa+\dx}{4} {
        \psdot[linecolor=blue](\nd,#3)
        \pnode(\nd,#3){x\inode}
    }
    \nput{-90}{x1}{$i$}\nput{-90}{x2}{$j$}\nput{-90}{x3}{$l$}\nput{-90}{x4}{$k$}
       \ncline[linestyle=dotted,arrows=-,ArrowInsideNo=0]{x1}{x4}
}
\begin{document}

\title{On Dunkl angular momenta algebra}

\author{Misha Feigin}
\email{misha.feigin@glasgow.ac.uk}
\affiliation{School of Mathematics and Statistics,
University of Glasgow,
15 University Gardens,
Glasgow G12 8QW, UK}
\author{Tigran Hakobyan}
\email{tigran.hakobyan@ysu.am}
\affiliation{Yerevan State University, 1 Alex Manoogian, 0025 Yerevan, Armenia}
\affiliation{Tomsk Polytechnic University, Lenin Ave. 30, 634050 Tomsk, Russia}

\begin{abstract}
We consider the quantum angular momentum generators, deformed by means of the Dunkl operators. Together with the reflection operators they generate a subalgebra in the rational Cherednik algebra associated with a finite real reflection group. We find all the defining relations of the algebra, which appear to be quadratic, and we show that the algebra is of Poincar\'e-Birkhoff-Witt (PBW) type. We show that this algebra contains the angular part of the Calogero--Moser Hamiltonian and that together with constants it generates the centre of the algebra. We also consider the $gl(N)$ version of the subalgebra of the rational Cherednik algebra and show that it is a non-homogeneous quadratic algebra of PBW type as well. In this case the central generator can be identified with the usual Calogero--Moser Hamiltonian associated with the Coxeter group in the harmonic confinement.


\end{abstract}
\maketitle

\section{Introduction}
The  differential-difference operators
\begin{equation}
\label{Dunkl}
{\mathcal D}_i=\partial_{x_i}+\sum_{\a \in {\mathcal R}_+}\frac{g_\alpha  \a_i}{(\alpha, x)}(1-s_{\alpha})
\end{equation}
were introduced by Dunkl  in
 \cite{dunkl89} in the context of study of so called  Dunkl-harmonic and multivariable orthogonal polynomials. Here
${\mathcal R}_+$ is a positive half of a Coxeter root system $\mathcal R$, and $s_\alpha$ are the corresponding orthogonal reflections from the associated Coxeter group $W$, $\a=(\a_1,\ldots,\a_N)$. In the case of ${\mathcal R}= A_{N-1}$ they take the form of the permutation operators $s_{ij}$ which act on the
space of functions as follows:
\begin{equation*}
(s_{ij}\psi)(x_1,\dots,x_i,\dots,x_j,\dots x_N)=\psi(x_1,\dots,x_j,\dots,x_i,\dots x_N).
\end{equation*}
The function $g(\alpha)=g_\alpha$ is $W$-invariant so in the case of type $A_{N-1}$ it is an arbitrary 
complex constant. The operators ${\mathcal D}_i$ may be considered as $g$-dependent deformations of the partial differentiation operators
$\partial_i=\partial_{x_i}$.

These operators play a key role in the theory of celebrated Calogero--Moser model \cite{calogero69}
(see also the reviews \cite{perelomov81}, \cite{polychronakos}). In particular, a complete set of quantum
integrals can be described in terms of Dunkl operators \cite{Heckman}. Modifications of the operators
\eqref{Dunkl} were  used in \cite{poly92, brink92} to define creation-annihilation operators and
quantum integrals for the rational Calogero-Moser model in the harmonic confinement.



The operators ${\mathcal D}_i$ together with the multiplication operators $x_i$ and the group algebra  $\C W$ with their standard action 
generate rational Cherednik algebra $H_g(W)$  \cite{EG}. We also refer to the books \cite{etingof, cher} for the rich theory of Cherednik algebras and their connections to other areas.

In this paper we are concerned with the deformation of the quantum angular momentum
generators, constructed by means of the Dunkl operators:
\begin{equation}
\label{Mkl'}
{\mathcal M}_{ij}=x_i {\mathcal D_j}-x_j {\mathcal D}_i.
\end{equation}
Our motivation still comes from Calogero--Moser systems, and, more specifically, angular part  of the rational quantum Calogero--Moser Hamiltonian. This system plays a role in the original Calogero's work \cite{calogero69}, and
it has been studied more recently in a number of papers both in the classical \cite{hln,hkl} and quantum \cite{feigin, flp} cases.
Thus the angular Hamiltonian  is shown to define a superintegrable
system on $N-1$ dimensional hypersphere (\cite{hkl}, see also \cite{woj83}), and a way to represent conserved charges was developed in \cite{hln}.  We note that identifying Liouville charges is still an open problem.
The operators \eqref{Mkl'} were already used in \cite{feigin} as building blocks for the intertwining operators between the angular Hamiltonians  with coupling constants different by an integer.

In this paper we firstly deal with the subalgebra of the rational Cherednik algebra of type $A_{N-1}$ generated by the elements ${\mathcal M}_{ij}$ and by the permutations $s_{ij}$. A close algebra already appeared in the work of V. Kuznetsov \cite{kuznetsov} in connection with the Calogero--Moser system.  We explain that the angular Calogero--Moser Hamiltonian can be realised as an element in this algebra. Furthermore, we show that it generates the centre of the algebra.
This operator can be considered as a deformation of the usual quadratic Casimir invariant of $so(N)$,
which  corresponds to the angular momentum square.

We also describe all the  relations which generators ${\mathcal M}_{ij}$ satisfy. The commutation relations are simple extensions of the usual $so(N)$  relations
with metrics replaced by pairwise particle permutations. There are additional ``crossing" relations which are quadratic in the generators too. We describe a basis in this algebra and show that it is a non-homogeneous quadratic algebra of Poincar\'e-Birkhoff-Witt (PBW) type in the sense of \cite{BG} (see also \cite{AS}). These results are generalised for any Coxeter root system $\mathcal R$ in the final section.

We also consider the $gl(N)$ version of the subalgebra of the rational Cherednik algebra which is generated by the elements $x_k {\mathcal D}_l$ rather than their combinations ${\mathcal M}_{kl}$, and by the group algebra.
A closely related algebra in type $A_{N-1}$ was considered in \cite{tur94}. We describe relations in this algebra too and it gives another example of a non-homogeneous quadratic algebra of PBW type.

\section{Angular  Calogero-Moser Hamiltonian}
\label{sec:angular}
\vspace{5mm}

We will work with the gauged Dunkl operators defined by
\begin{equation}
\label{nabla}
\nabla_i=\p_i -  \sum_{\nad{j=1}{j\ne i}}^N \frac{g}{x_i-x_j} s_{ij}.
\end{equation}
They correspond to \eqref{Dunkl} via
the transformation $\psi\to \prod_{i<j}(x_i-x_j)^g\psi$
 of the wavefunctions. 
 Applying them instead of usual momenta operators for the free-particle system, we arrive
 at the modified Hamiltonian \cite{polychronakos}
 \begin{equation}
 \label{Hdunkl}
H=-\frac12\sum_{i=1}^N \nabla_i^2  = -\frac12\sum_{i=1}^N \partial_i^2 + \sum_{i<j}\frac{g(g-s_{ij})}{(x_i-x_j)^2}.
\end{equation}
 Let now $\text{Res}(A)\equiv \text{Res}_+(A)$ be the restriction of an ${\cal S}_N$-invariant operator $A$
 to the space of symmetric functions, and let $\text{Res}_{-}(A)$ be its restriction to the antisymmetric functions.
 Then
\begin{gather}
\label{heckmform}
\text{Res}_\pm\, H =  H_\pm,
\end{gather}
where  $H_\pm$ is the Calogero--Moser Hamiltonian \cite{Heckman}:
\begin{equation}
\label{Hpm_nonspherical}
H_\pm=-\frac12 \sum_{i=1}^N \p_i^2 +\sum_{i<j} \frac{ g (g\mp 1)}{(x_i-x_j)^2}.
\end{equation}

Apart from $g=0$ case, when the algebra generated by $\nabla_i, x_j$ reduces to the Heisenberg algebra,
the algebra formed by the coordinates and Dunkl operators for $g\ne0$
is not closed:
\begin{equation}
\label{com-nabla}
[\nabla_i,x_j]=S_{ij},
\qquad
[\nabla_i,\nabla_j]=0,
\qquad
[x_i,x_j]=0,
\end{equation}
where for the later convenience the pairwise permutation
operators are rescaled and the new notation $S_{ij}$
is introduced:
\begin{equation}
\label{Sij}
S_{ij}=
\begin{cases}
 - g s_{ij}, & \text{for $i\ne j$},
 \\
 1+g\sum_{k\ne i}s_{ik}=1-\sum_{k\ne i}S_{ik}, & \text{for $i=j$}.
\end{cases}
\end{equation}


Introduce spherical coordinates and let $H_\Omega$, $H_{\Omega, \pm}$ be the corresponding angular Calogero--Moser Hamiltonians obtained by the separation of radial and angular variables:
\begin{gather}
\label{Hspherical}
 H=-\frac{\partial_r^2}{2 }-\frac{N-1}{2r}\partial_r+\frac{H_\Omega}{r^2},
 \\
 \label{Hspherical-pm}
  H_\pm=-\frac{\partial_r^2}{2 }-\frac{N-1}{2r}\partial_r+\frac{H_{\Omega, \pm}}{r^2}.
 \end{gather}

Define Dunkl angular momentum operators
\begin{equation}
\label{Mkl''}
M_{kl}= x_k \nabla_l-x_l \nabla_k.
\end{equation}
 It appears that the operators
$M_{kl}$ allow to express the angular Hamiltonian in analogy
with the formula \eqref{heckmform} for the usual Calogero-Moser Hamiltonian.

\begin{prop}
\label{CMangD}
The angular Hamiltonians can be obtained as
\begin{align}
\label{Homega}
H_\Omega & = -\frac12 \mathbf{M}^2   + \frac12 S(S-N+2),
\\
\label{Hpm}
H_{\Omega, \pm} & =  {\rm Res}_\pm\, H_\Omega  = - \frac12 \mathbf{M}^2 + \gamma_\pm,
 \qquad \gamma_\pm = g N(N-1)\frac{gN(N-1)\pm 2(N-2)}{8},
\end{align}
where we define the Dunkl angular momentum square and the symmetric group algebra invariant  operators, respectively, as
\begin{gather}
\label{M2}
\mathbf{M}^2=\sum_{i<j} M_{ij}^2,
\\
\label{S}
S=\sum_{i<j} S_{ij}.
\end{gather}
\end{prop}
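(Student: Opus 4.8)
The plan is to compute $\mathbf{M}^2 = \sum_{i<j} M_{ij}^2$ directly in terms of the operators $x_i$, $\nabla_j$ and $S_{ij}$, and then recognise the angular Hamiltonian inside the result. First I would expand $M_{ij}^2 = (x_i\nabla_j - x_j\nabla_i)^2$ and sum over $i<j$. Using the commutation relations \eqref{com-nabla}, namely $[\nabla_i,x_j]=S_{ij}$ and $[\nabla_i,\nabla_j]=[x_i,x_j]=0$, one moves all the $\nabla$'s to the right of all the $x$'s. The terms that arise are of three types: a piece $\big(\sum_i x_i^2\big)\big(\sum_j \nabla_j^2\big) = r^2\cdot(-2H)$; a piece built from the Euler-type operator $E=\sum_i x_i\nabla_i$, which after separation of variables becomes $r\partial_r$ plus correction terms; and commutator terms $\sum S_{ij}$ and $\sum S_{ij}^2$ coming from the reordering. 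I would organise the bookkeeping so that the $r$-dependence matches \eqref{Hspherical} exactly, leaving an operator purely in the angular variables.

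The second step is to handle the reflection-algebra combinatorics. Here one needs the identities $S_{ij}^2 = g^2$ for $i\ne j$ (since $s_{ij}^2=1$), together with relations among the $S_{ij}$ for overlapping index pairs, to simplify $\sum_{i<j} S_{ij}^2$ and to re-express cross terms $\sum x_i\nabla_j S_{ij}$ and similar. One should also use the diagonal definition $S_{ii}=1-\sum_{k\ne i}S_{ik}$ from \eqref{Sij} to collapse sums of the form $\sum_{i,j}S_{ij}$ into $N$ minus something, and to relate $\sum_{i\ne j}S_{ij}$ to $S=\sum_{i<j}S_{ij}$ (noting $S$ is symmetric in $i,j$ so $\sum_{i\ne j}S_{ij}=2S$). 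Tracking the powers of $S$ that appear is what produces the quadratic expression $\tfrac12 S(S-N+2)$: I expect an $S^2$ term from the square of the Euler operator's reflection part and a linear $(N-2)S$ term from the cross terms, with signs to be pinned down by careful reordering.

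Once \eqref{Homega} is established, \eqref{Hpm} follows by applying $\mathrm{Res}_\pm$. On symmetric functions $\mathrm{Res}_+(s_{ij})=1$, so $\mathrm{Res}_+(S_{ij})=-g$ for $i\ne j$ and hence $\mathrm{Res}_+(S)=-g\binom{N}{2}=-\tfrac12 gN(N-1)$; on antisymmetric functions $\mathrm{Res}_-(s_{ij})=-1$, giving $\mathrm{Res}_-(S)=\tfrac12 gN(N-1)$. Substituting these scalars for $S$ in $\tfrac12 S(S-N+2)$ and simplifying gives exactly $\gamma_\pm = gN(N-1)\big(gN(N-1)\pm 2(N-2)\big)/8$, and since $\mathbf{M}^2$ commutes with all $s_{ij}$ the restriction passes through it unchanged; combined with \eqref{heckmform} this identifies $\mathrm{Res}_\pm H_\Omega$ with $H_{\Omega,\pm}$ from \eqref{Hspherical-pm}.

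The main obstacle will be the middle step: correctly reordering the non-commuting operators $x_i$ and $\nabla_j$ while keeping the reflection operators $S_{ij}$ in the right places, since the $S_{ij}$ do not commute with the $x_k$ or $\nabla_k$ and the sum $\sum_{i<j}$ mixes many overlapping index pairs. I would guard against sign and combinatorial errors by first checking the formula in the free case $g=0$ (where it must reduce to the classical identity $\mathbf{L}^2 = r^2\Delta$-type relation with $S_{ii}=1$, $S_{ij}=0$) and then tracking only the $g$-linear and $g$-quadratic corrections separately.
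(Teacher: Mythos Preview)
Your proposal is correct and follows essentially the same strategy as the paper: expand $\mathbf{M}^2$, reorder using \eqref{com-nabla}, separate off $\mathbf{x}^2\boldsymbol{\nabla}^2$ and the Euler-type piece, and compare with \eqref{Hspherical}. The paper packages this via the intermediate identity $\mathbf{M}^2 = \mathbf{x}^2\boldsymbol{\nabla}^2 - (\mathbf{x}\cdot\boldsymbol{\nabla})^2 + (2S-N+2)(\mathbf{x}\cdot\boldsymbol{\nabla})$, whose cross terms (exactly the ones you flag as the main obstacle) collapse using the single identity $\sum_k\{S_{lk},x_k\}=2x_l$; note that the $S^2$ you anticipate arises from squaring $\mathbf{x}\cdot\boldsymbol{\nabla}=r\partial_r+S$, not from any $\sum S_{ij}^2$ term.
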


\begin{proof}
The proposition is a consequence of  the following relation
\begin{equation}
\label{M2x2}
\mathbf{M}^2=\mathbf{x}^2\boldsymbol{\nabla}^2-(\mathbf{x}\cdot\boldsymbol{\nabla})^2
   + (2S-N+2)(\mathbf{x}\cdot\boldsymbol{\nabla}),
\end{equation}
where we use the usual vector and scalar product notations.

Indeed, using  $\mathbf{x}^2=r^2$ and substituting
\begin{equation*}
\mathbf{x}\cdot\boldsymbol{\nabla} = \mathbf{x}\cdot\boldsymbol{\partial}+S=r\partial_r+S
\end{equation*}
into \eqref{M2x2}, we obtain
\begin{equation*}
\mathbf{M}^2=r^2\boldsymbol{\nabla}^2 -  r^2\partial_r^2 -(N-1)r\partial_r + S(S-N+2).
\end{equation*}
This relation together with \eqref{S} and \eqref{Hspherical} proves the desired
expression \eqref{Homega} for the spherical Hamiltonian.  Its restriction for bosons (fermions)
\eqref{Hpm}, corresponding to the pure Calogero model, is obtained from
$$
\text{Res}_\pm \,S = \mp \, g\frac{N(N-1)}{2}.
$$
In order to complete the proof, we have to justify the relation \eqref{M2x2}.
Substituting \eqref{Mkl''} into \eqref{M2} and using the commutation relations
\eqref{com-nabla}, we have:
\begin{equation}
\label{calculation-start}
\begin{split}
\mathbf{M}^2 &= \sum_{k,l} (x_k \nabla_l x_k \nabla_l-x_l\nabla_k x_k \nabla_l)
\\
&= \mathbf{x}^2\boldsymbol{\nabla}^2-(\mathbf{x}\cdot\boldsymbol{\nabla})^2
    +\sum_{k,l}  ( x_k S_{kl} \nabla_l+S_{kl} x_k\nabla_l-S_{kk} x_l\nabla_l )
\\
&= \mathbf{x}^2\boldsymbol{\nabla}^2-(\mathbf{x}\cdot\boldsymbol{\nabla})^2 + (2S-N)(\mathbf{x}\cdot\boldsymbol{\nabla})
    +\sum_{k, l}  ( x_k S_{kl} +S_{kl} x_k)\nabla_l  .
\end{split}
\end{equation}
The last term can be simplified further by rewriting  the definition \eqref{Sij} of $S_{ij}$
 in the generating function form
\begin{equation}
\label{Sx}
\sum_kS_{lk} x_k =  x_l +\sum_{k\ne l} S_{lk} (x_k-x_l).
\end{equation}
Its application 
leads to the following identity:
\begin{equation}
\label{Sx'}
\sum_k\{ S_{lk},x_k\}
= 2x_l+\sum_{k\ne l}  \{ S_{lk},x_l-x_k\}= 2x_l.
\end{equation}
By substituting \eqref{Sx'} into \eqref{calculation-start} we get the desired expression \eqref{M2x2}.
This completes the proof.
\end{proof}

We mention that a related statement in the case of Dunkl operators associated with the group $\Z_2^N$ is contained in \cite{Dunkl12}.

\section{Dunkl angular momenta algebra}
Recall that the creation-annihilation operators $a_i^{+}$, $a_i^{-}\equiv a_i$ for the Calogero--Moser rational model in the harmonic confinement \cite{brink92}, \cite{poly92}  are given by
\begin{equation}
\label{a-pm}
a_i^\pm=\frac{\sqrt{2}}{{2}}(x_i\mp \nabla_i).
\end{equation}
They satisfy the relations
\begin{equation}
\label{ai}
{[}a_i,a_j^+{]}=S_{ij},
\qquad
[a_i,a_j]=[a_i^+,a_j^+]=0
\end{equation}
and  are hermitian conjugate to each other.
This coincides with the relations \eqref{com-nabla} for the operators $x_i, \nabla_j$ via
the correspondence
\begin{equation}
\label{a-nabla}
a^+_i\leftrightarrow x_i,
\qquad
a_i\leftrightarrow \nabla_i.
\end{equation}
The commutation relations  between creation-annihilation operators and  
 group elements have the form
\begin{equation}
\label{com-sx}
S_{ij}a^\pm_i=a^\pm_jS_{ij} ,
\qquad
[S_{ij},a^\pm_k]=0
\qquad \text{with} \quad i \ne j \ne k \ne i.
\end{equation}
Recall also that
\begin{gather}
\label{com-ss}
S_{ij}S_{ik}=S_{jk}S_{ij},
\qquad [S_{ij},S_{kl}]=0,
\qquad S_{ij}^2=g^2,
\quad S_{ij}=S_{ji},
\end{gather}
where differently denoted indexes are assumed again to have different values, and it follows that
\begin{equation}
\label{com-sii}
S_{ij}S_{jj}=S_{ii}S_{ij}, \qquad
[S_{ij},S_{kk}]=0.
\end{equation}

The deformed angular momentum  generators \eqref{Mkl''} are expressed in terms of  the deformed
creation-annihilation operators \eqref{a-pm} as
\begin{equation}
\label{Mkl}
M_{kl}=a^+_ka_l-a_l^+ a_k. 
\end{equation}
They  inherit the standard (anti-)commutation
relations with the permutations operators:
\begin{equation}
\label{com-MS}
[S_{ij},M_{kl}]=0,
\qquad
\{S_{ij},M_{ij}\}=0,
\qquad
S_{ij}M_{ik}=M_{jk}S_{ij},
\end{equation}
provided that differently denoted indexes have different values.


When the deformation parameter $g$ vanishes,
\eqref{Mkl} corresponds to the standard representation
of the generators of the algebra $so(N)$ in terms of
bosonic creation-annihilation operators or vector fields. In this case, the corresponding operators $M_{ij}^0$ satisfy
the well known commutation relation
\begin{equation}
\label{son-class}
[M_{ij}^0,M_{kl}^0]=M_{il}^0\delta_{jk}+M_{jk}^0\delta_{il}-M_{ik}^0\delta_{lj}-M_{jl}^0\delta_{ik}.
\end{equation}
For nontrivial values of the deformation parameter, these generators, like $a^\pm_i$ themselves,
do not  form a Lie algebra  any more, since their commutators include the coefficients
dependent on the  permutation operators. 

We are interested in the {\it Dunkl angular momenta algebra} $\H_g so(N)$ which is generated by the operators
$M_{kl}$ and by the group algebra $\C {\cal S}_N$.
It can be considered as a subalgebra of the rational Cherednik algebra $H_g({\cal S}_N)$ via the homomorphism which maps generators $M_{kl} \to x_k  {\mathcal D}_l - x_l {\mathcal D}_k$.

%

To get the commutators
between the operators $M_{ij}$,  one just replaces the Euclidean metric $\delta_{ij}$ in \eqref{son-class} by the permutation operator $S_{ij}$.
This matches  the  deformation rules in the commutators given by \eqref{com-nabla}
or \eqref{ai}.
More precisely,  the following statement takes place whose version in another gauge is contained in \cite{kuznetsov} (Section 8,  arXiv version only).
\begin{prop} {(cf. \cite{kuznetsov})}
\label{prop1}
Operators  \eqref{Mkl} satisfy the following commutation relations:
\begin{equation}
\label{son}
[M_{ij},M_{kl}]=M_{il}S_{jk}+M_{jk}S_{il}-M_{ik}S_{lj}-M_{jl}S_{ik},
\end{equation}
where $1\le i,j,k,l \le N$.
\end{prop}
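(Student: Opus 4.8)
The plan is to verify \eqref{son} by a direct computation from the defining relations \eqref{ai}, \eqref{com-sx} of the creation--annihilation operators together with the definition \eqref{Sij} of $S_{ij}$. The first step is to record how $M_{ij}$ acts on the generators: expanding $[a_i^+a_j-a_j^+a_i,\,a_k^{\pm}]$ by \eqref{ai} and $[a_l^+,a_m^+]=[a_l,a_m]=0$ gives, for arbitrary indices $i,j,k$,
\begin{equation*}
[M_{ij},a_k^+]=a_i^+S_{jk}-a_j^+S_{ik},
\qquad
[M_{ij},a_k]=S_{jk}a_i-S_{ik}a_j ,
\end{equation*}
which are exactly the co-action rules behind \eqref{son-class} with the Euclidean metric $\delta$ replaced by $S$. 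Note no product of $S$'s arises, so \eqref{com-ss} will not actually be needed, only the commutation of a single $S$ past the $a$'s.

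The second step applies the Leibniz rule to $[M_{ij},\,a_k^+a_l-a_l^+a_k]$ and substitutes the formulas above, producing an eight-term expression in which each $S$ sits between a creation operator and an annihilation operator. One then moves each such $S$ to the right, past the adjacent annihilation operator. When $i,j,k,l$ are pairwise distinct every $S$ that appears is off-diagonal and commutes past the annihilation operator by \eqref{com-sx}; the eight terms then regroup, using $M_{pq}=a_p^+a_q-a_q^+a_p$ and $S_{pq}=S_{qp}$, precisely into $M_{il}S_{jk}+M_{jk}S_{il}-M_{ik}S_{lj}-M_{jl}S_{ik}$. This settles the pairwise-distinct case and pins down the shape of the right-hand side.

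It remains to deal with coincidences among the indices. Both sides of \eqref{son} are antisymmetric under $i\leftrightarrow j$, under $k\leftrightarrow l$, and under $(ij)\leftrightarrow(kl)$ (using $S_{pq}=S_{qp}$ and $M_{pq}=-M_{qp}$, cf.\ \eqref{com-ss}), which reduces the remaining cases to two: $\{i,j\}=\{k,l\}$, immediate from $M_{ii}=0$; and exactly one shared index, which we may take to be $j=k$ with $i,j,l$ distinct. In the latter case the diagonal operator $S_{jj}=1-\sum_{m\ne j}S_{jm}$ enters when moving $S$ to the right, and from \eqref{Sij} and \eqref{com-sx} one obtains $[S_{jj},a_l]=(a_l-a_j)S_{jl}$ for $l\ne j$, the ``$1$'' in $S_{jj}$ reproducing the undeformed term $M_{il}$ of \eqref{son-class}. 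Collecting the contributions then yields $[M_{ij},M_{jl}]=M_{il}S_{jj}-M_{ij}S_{lj}-M_{jl}S_{ij}$, which is \eqref{son} at $k=j$. I expect this single-coincidence configuration to be the only delicate point: there the diagonal operator $S_{jj}$ is a sum over the whole group, and one must keep careful track of the correction terms it generates as it is commuted through the annihilation operators, so that they assemble into the stated right-hand side.
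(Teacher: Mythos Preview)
Your argument is correct and follows essentially the same route as the paper: a direct computation from \eqref{ai}, \eqref{com-sx}, reduction by the antisymmetries to the pairwise-distinct case and the single-coincidence case $k=j$, and in the latter the observation that the extra terms generated by commuting $S_{jj}$ through an annihilation operator cancel against those from $S_{jl}$ (your identity $[S_{jj},a_l]=(a_l-a_j)S_{jl}$ is exactly the paper's relation \eqref{Sjjai}). The only cosmetic difference is that the paper first records $[a_i^+a_j,a_k^+a_l]=a_i^+S_{jk}a_l-a_k^+S_{il}a_j$ and then antisymmetrises, whereas you first record $[M_{ij},a_k^{\pm}]$ and then apply Leibniz; both lead to the same six-term expression \eqref{asym-2} and the same cancellation.
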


\begin{proof}
In order to prove \eqref{son}, firstly,  we
verify the commutators
\begin{equation*}
\label{sl}
[a_i^+a_j,a_k^+a_l]=a_i^+S_{jk}a_l-a_k^+S_{il}a_j,
\end{equation*}
which is  a simple consequence of the relations \eqref{ai}.
Then using them together with the definition \eqref{Mkl},
we get
\begin{equation}
\label{asym}
[M_{ij},M_{kl}]=\mathop{\text{asym}}_{[ij],[kl]}(a_i^+S_{jk}a_l-a_k^+S_{il}a_j)
=\mathop{\text{asym}}_{[ij],[kl]}(a_i^+S_{jk}a_l-a_l^+S_{jk}a_i),
\end{equation}
where in order to shorten the formulae, we introduce the operator, which
antisymmetrizes over the pairs of indexes included in square brackets:
\begin{equation}\label{asym-def}
\mathop{\text{asym}}_{[ij]}A_{ij}:=A_{ij}-A_{ji}.
\end{equation}

The simplest case is when the values of all four indices differ.
Then permutation operators commute with the annihilation-creation operators in all terms,
and we arrive at the desired commutation relation \eqref{son}.

It remains to consider the less trivial case when only three indexes differ.
It is sufficient to check $k=j$ case only. Then
\eqref{asym} acquires  the following form:
\begin{equation}
\label{asym-2}
\begin{split}
[M_{ij},&M_{jl}]=\mathop{\text{asym}}_{{[ij],[jl]}}(a_i^+S_{jj}a_l-a_l^+S_{jj}a_i)
\\
&=a_i^+S_{jj}a_l - a_j^+S_{ij}a_l - a_i^+S_{jl}a_j - a_l^+S_{jj}a_i + a_l^+S_{ij}a_j + a_j^+S_{jl}a_i
\end{split}
\end{equation}
Then moving permutations to the right in order to get the desirable result,
the additional  commutators appear in contrast to the previous case. However,
such unwanted terms are canceled out.
Indeed, the only nontrivial commutations between
permutations and creation-annihilation
operators are given by
\begin{gather}
\label{Sjjai}
[S_{jj}, a^\pm_i] = [S_{ij},a^\pm_j] = (a^\pm_i-a^\pm_j) S_{ij},
\\
\label{Sjjaj}
[S_{jj},a^\pm_j]=\sum_{k \ne j}(a^\pm_j-a^\pm_k)S_{kj},
\end{gather}
as can be deduced from the definition \eqref{Sij}.
Due to \eqref{Sjjai}, the appeared commutators eliminate each other:
$$
a_i^+\left([S_{jj},a_l] - [S_{jl},a_j]\right)
-a_l^+\left([S_{jj},a_i] - [S_{ij},a_j]\right)=0.
$$
Therefore, the commutation relation \eqref{son} holds for arbitrary index values,
and we have finished its proof.
\end{proof}

The relation \eqref{son} 
is reduced to the usual
commutation \eqref{son-class} for $so(N)$ generators in the nondeformed limit $g=0$.

The order of operators $S_{ij}$ and $M_{kl}$ in the right-hand side of the equation \eqref{son}
must be respected in general.
Using the relations \eqref{com-MS},  some permutations
in the right-hand side of the commutator \eqref{son} can be moved to the left of the Dunkl angular momenta generators.
In particular, if all the indexes differ then the operators
commute but, in general case,  the obtained in this way relations
will be more complicate.
However, it is easy to verify that all permutations can be moved
to the left simultaneously:
\begin{equation}
\label{son-rev}
[M_{ij},M_{kl}]=S_{jk}M_{il}+S_{il}M_{jk}-S_{lj}M_{ik}-S_{ik}M_{jl}.
\end{equation}
The above relation follows also from the  invariance of the subalgebra $\H_g so(N)$
under the Hermitian conjugation,
\begin{equation}
\label{herm}
 M_{ij}^+=-M_{ij}, \qquad S_{ij}^+=S_{ij},
\end{equation}
which is inherited from the bigger algebra $H_g({\cal S}_N)$.


We also note the equivariance of the relations \eqref{son}, \eqref{son-rev} under the symmetric group action. It is easy to verify using the relations
\eqref{com-MS},  \eqref{com-ss}, and \eqref{com-sii}
that any permutation $\sigma\in {\cal S}_N$, when acting on the commutator  \eqref{son}, just permutes
all its indexes: $i\to \sigma(i)$.

\section{\label{sec:crossing} Crossing relation and PBW property}
Let us establish another quadratic relation, which  the
generators of the  deformed angular momentum algebra $\H_g so(N)$ satisfy.
We call it the crossing relation.
\begin{prop}
The deformed angular momentum operators \eqref{Mkl} satisfy the following crossing relations:
\begin{equation}
\label{cros-quant}
M_{ij}M_{kl}+M_{jk}M_{il}+M_{ki}M_{jl}=M_{ij}S_{kl}+M_{jk}S_{il}+M_{ki}S_{jl},
\end{equation}
where $1\le i,j,k,l \le N$.
\end{prop}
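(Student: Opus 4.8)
The plan is to pass to the deformed creation--annihilation operators via \eqref{Mkl} and reduce \eqref{cros-quant} to a normal-ordering computation. I would first observe that both sides of \eqref{cros-quant} are manifestly invariant under the cyclic relabelling $(i,j,k)\to(j,k,i)$ with $l$ held fixed, and use this to dispose of the degenerate index patterns. If two of $i,j,k$ coincide, then $M_{pp}=0$ together with $M_{qp}=-M_{pq}$ makes both sides vanish after an evident pairwise cancellation. If $i,j,k$ are pairwise distinct but $l$ equals one of them, the cyclic symmetry lets me assume $l=k$; then $M_{kk}=0$ collapses the left-hand side to the single commutator $[M_{jk},M_{ik}]$, and substituting the indices $(j,k,i,k)$ into the $so(N)$-type relation \eqref{son} turns this into $M_{jk}S_{ik}+M_{ki}S_{jk}+M_{ij}S_{kk}$, which is precisely the right-hand side of \eqref{cros-quant} at $l=k$. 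So these cases reduce to Proposition~\ref{prop1}.

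The genuine content is the case of four pairwise distinct indices. Here I would expand each product $M_{ij}M_{kl}=(a_i^+a_j-a_j^+a_i)(a_k^+a_l-a_l^+a_k)$ into four quartic monomials and normal-order the central $a\,a^{+}$ pair using $[a_p,a_q^+]=S_{pq}$ from \eqref{ai}; this presents $M_{ij}M_{kl}$ as a quartic part of the shape $a^{+}a^{+}aa$ plus a quadratic remainder built from terms $\pm\,a_p^{+}S_{qr}a_s$. The quartic parts coincide with the corresponding products of the undeformed generators, and their cyclic sum vanishes for distinct indices --- a standard identity that I would verify by direct monomial bookkeeping --- so only the quadratic remainders survive the cyclic summation. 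I would then collect those remainders according to their leftmost creation operator; using that $S_{qr}$ with $q\ne r$ commutes with every $a_s^{\pm}$ for $s\notin\{q,r\}$ (the relations \eqref{com-sx}), the $a_i^{+}$ block collapses to $a_i^{+}(S_{kl}a_j-S_{jl}a_k)$, and by cyclic symmetry the full remainder becomes $M_{ij}S_{kl}+M_{jk}S_{il}+M_{ki}S_{jl}$, the claimed right-hand side of \eqref{cros-quant}.

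The hard part is organisational rather than conceptual: one must keep the twelve quartic and twelve quadratic monomials under tight enough control that the quartic cancellation and the reassembly of the quadratic remainder into the right-hand side of \eqref{cros-quant} become transparent rather than a brute-force shuffle; a secondary care point is invoking \eqref{son} with the correct index substitution in the sub-case $l\in\{i,j,k\}$. Alternatively one could run the computation uniformly for all index patterns at once, at the price of carrying the $S_{ii}$-commutators \eqref{Sjjai}--\eqref{Sjjaj} through the reorderings exactly as in the proof of Proposition~\ref{prop1}; I would nonetheless prefer the case split above, since it isolates the single genuinely new computation and lets everything else follow from \eqref{son}.
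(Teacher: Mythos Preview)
Your proposal is correct and follows essentially the same route as the paper: reduce the degenerate index patterns to Proposition~\ref{prop1} (or to triviality), and for four distinct indices pass to the creation--annihilation presentation \eqref{Mkl}, observe that the normal-ordered quartic part satisfies the classical crossing identity and hence cancels under the cyclic sum, and collect the quadratic $a^{+}Sa$ remainders into the right-hand side. Your organisation of the remainder by leftmost creation operator is a mild presentational variant of the paper's bookkeeping, not a different argument.
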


\begin{proof}
Note that in both sides of the relation \eqref{cros-quant} the sum is taken over
the cyclic permutations of the first three indexes $i,j,k$, so that
it can be rewritten as
\begin{equation}
\label{cros-cyc}
\mathop{\text{cyclic}}_{(ijk)} M_{ij}M_{kl} = \mathop{\text{cyclic}}_{(ijk)} M_{ij}S_{kl},
\end{equation}
where for  convenience the cyclic permutation sum operator is introduced:
\begin{equation}
\label{def-cyc}
\mathop{\text{cyclic}}_{(ijk)} A_{ijk} := A_{ijk}+A_{jki}+A_{kij}.
\end{equation}
Note that the right-hand side of the equality \eqref{cros-quant} contains the terms, which appear also in the
right-hand side of the commutation relation \eqref{son}. 
It is easy to verify that if the values of any two of the indexes $i, j, k, l$
coincide, then the relation \eqref{cros-quant}  either reduces  to the commutation relation, or
its both side vanish trivially.
Therefore, it is enough to consider the case when the values of all four indexes differ.

Define the usual normal order $\mathcal{N}()$ with the creation
operators placed on the left hand side  and annihilation operators
on the {right}. For instance, $\mathcal{N}(a_i^+a_ja_k^+a_l)=a_i^+a_k^+a_ja_l$.
The operators commute under the normal order:
$\mathcal{N}([M_{ij},M_{kl}])=0$.
Since the classical momenta obey the crossing relations,
the same relation holds also for the normal order:
\begin{equation}
 \label{crossing}
\mathcal{N} (M_{ij}M_{kl}+M_{jk}M_{il}+M_{ki}M_{jl})=0.
\end{equation}
Using the definition \eqref{Mkl}, one can express each product
in terms of the normal order as
\begin{equation*}
M_{ij}M_{kl}=\mathcal{N}(M_{ij}M_{kl})
+\mathop{\text{asym}}_{[ij][kl]}a_i^+a_l S_{jk}.
\end{equation*}
Therefore, the right hand side of the crossing relation equals
\begin{equation*}
\begin{aligned}
M_{ij}M_{kl}+M_{jk}M_{il}+M_{ki}M_{jl}
&=a_i^+a_l S_{jk} + a_j^+a_k S_{il} - a_j^+a_l S_{ik} - a_i^+a_k S_{jl}
\\
&=a_j^+a_l S_{ki} + a_k^+a_i S_{jl} - a_k^+a_l S_{ji} - a_j^+a_i S_{kl}
\\
&=a_k^+a_l S_{ij} + a_i^+a_j S_{kl} - a_i^+a_l S_{kj} - a_k^+a_j S_{il}
\end{aligned}
\end{equation*}
Canceling out and grouping together similar terms, we arrive at the
desired expression \eqref{cros-quant}.
\end{proof}

Note that all the permutation operators in the right-hand side of the crossing relations
can be moved in front of the angular momentum operators:
\begin{equation}
\label{cros-cyc-2}
\mathop{\text{cyclic}}_{(ijk)} M_{ij}M_{kl}=\mathop{\text{cyclic}}_{(ijk)} S_{kl}M_{ij}.
\end{equation}
In the  the case of two equal indexes the relations
\eqref{cros-cyc} and \eqref{cros-cyc-2} are reduced to the
commutation relations \eqref{son} and \eqref{son-rev} respectively,
whose equivalence is already established.

Applying the hermitian conjugate  to the crossing relations \eqref{cros-quant}
and using \eqref{herm}, we obtain the equivalent relations
\begin{equation}
\label{cros-cyc-3}
\mathop{\text{cyclic}}_{(ijk)} M_{li}M_{jk}
=\mathop{\text{cyclic}}_{(ijk)} S_{li}M_{jk}
=\mathop{\text{cyclic}}_{(ijk)} M_{ij}S_{kl}
=\mathop{\text{cyclic}}_{(ijk)} M_{ij}M_{kl}.
\end{equation}
%
%

Like for the commutators, the system of crossing relations \eqref{cros-cyc-3} is invariant
with respect to the permutation group.

The sum of the equations  \eqref{cros-cyc} and \eqref{cros-cyc-2}
gives rise to a simpler crossing relation
written in terms of anticommutators with vanishing right-hand side:
\begin{equation}
 \label{cros-anticomm}
\mathop{\text{cyclic}}_{(ijk)}\, \{M_{ij},M_{kl}\}=0.
\end{equation}
This is the analogue of the crossing relations in quantum case, which possesses
the same symmetry as the classical one.
Its direct consequence is the vanishing of antisymmetrized product
\begin{equation}
 \label{cros-class}
\mathop{\text{asym}}_{[ijkl]} M_{ij}M_{kl}=0.
\end{equation}
For even values of $N$, this results in vanishing of the Pfaffian
\begin{equation}
 \label{casimir-N}
\sum_{i_1\dots i_N} \varepsilon_{i_1\dots i_N}
M_{i_1i_2}M_{i_3i_4}\ldots M_{i_{N-1}i_N}=0,
\end{equation}
which corresponds to a peculiar Casimir elements of $so(N)$ Lie algebra.
Here $\varepsilon_{i_1\dots i_N}$ is the Levi-Civita anti-symmetric tensor.

The crossing relation has  a clear graphical interpretation.
We represent the Dunkl angular momentum tensor $M_{ij}$ by an arrow with
tensor indexes at endpoints as shown on Figure~\ref{fig:Mij}.

The permutation operator $S_{ij}$ is depicted by a dashed bond.
Then the Figure~\ref{fig:crossing} describes schematically the
crossing relation \eqref{cros-quant}.  Each of six terms from the fist equation
is a product of two operators taken in a certain quantum order: the operator, containing
the index $l$  is positioned on the right side.
The ordering is not essential when we omit the first-order terms in momenta,  as shown
in the second equation on the Figure.

\psset{nodesep=2pt,dotsize=3pt,ncurv=0.8,ArrowInside=->,arrows=-,
           ArrowInsidePos=0.9,ArrowInsideNo=1,arrowscale=1.2}
\begin{figure}[t!]
\begin{pspicture}(-0.2,-0.4)(7.4,0.4)
    \def\i{1.3}\def\j{2.7}
    \pnode(\i,0){i} \pnode(\j,0){j}
    \psdots[linecolor=blue](\i,0)(\j,0)
    \ncline{i}{j}
    \uput[180](\i,0){$M_{ij}\;=\;\;$}
    \uput[-90](\i,0){$i$}
    \uput[-90](\j,0){$j$}
    \def\i{6}\def\j{7.4}
    \pnode(\i,0){i} \pnode(\j,0){j}
    \ncline[arrows=-,ArrowInside=-,linestyle=dashed]{i}{j}
    \uput[180](\i,0){$S_{ij}\;=\;\;$}
    \uput[-90](\i,0){$i$}
    \uput[-90](\j,0){$j$}
    \psdots[linecolor=blue](\i,0)(\j,0)
\end{pspicture}
\caption{\label{fig:Mij} Graphical representation for the deformed angular momentum and permutation
operators.}
\end{figure}

\begin{figure}[t!]
\begin{pspicture}(0,-0.5)(15,1.5)
    \plaquette{0}{0}{1}
    \ncline{k}{l}
    \ncline{i}{j}
    \plaquette{2.8}{0}{1}
    \ncline{i}{l}
    \ncline[linecolor=white,linewidth=3pt,arrows=-,ArrowInside=-]{j}{k}
    \ncline{j}{k}
    \plaquette{5.6}{0}{1}
    \ncline{j}{l}
    \ncline{i}{k}
    \plaquette{8.4}{0}{1}
    \ncline[arrows=-,ArrowInside=-,linestyle=dashed]{k}{l}
    \ncline{i}{j}
    \plaquette{11.2}{0}{1}
    \ncline[arrows=-,ArrowInside=-,linestyle=dashed]{i}{l}
    \ncline[linecolor=white,linewidth=3pt,arrows=-,ArrowInside=-]{j}{k}
    \ncline{j}{k}
    \plaquette{14}{0}{1}
    \ncline[arrows=-,ArrowInside=-,linestyle=dashed]{j}{l}
    \ncline{i}{k}
    \rput(1.9,0.5){\large$+$}
    \rput(4.7,0.5){\large$-$}
    \rput(7.5,0.5){\large$=$}
    \rput(10.3,0.5){\large$+$}
    \rput(13.1,0.5){\large$-$}
\end{pspicture}
\begin{pspicture}(1,-0.2)(15,2.5)
   \fourdotsijkl{1}{1}{1}
   \ncarc[arcangle=40]{x1}{x3}
   \ncarc[arcangle=40]{x2}{x4}
   \fourdotsijkl{6}{1}{1}
   \ncarc[arcangle=40]{x1}{x2}
   \ncarc[arcangle=40]{x3}{x4}
   \fourdotsijkl{11}{1}{1}
   \ncarc[arcangle=40]{x1}{x4}
   \ncarc[arcangle=40]{x2}{x3}
   \rput(5,1){$=$}\rput(10,1){$+$}\rput(15,1){$+$}
   \rput[r](15,0){$+\quad$ [\;\textsf{ first-order terms in $M_{ij}$}\;]}
\end{pspicture}
\caption{\label{fig:crossing}Graphical representation of the crossing relations \eqref{cros-quant}.
Each term is a products of two operators from Figure~\ref{fig:Mij}, with one with index $l$  positioned on the right.
The change of the operator order affects the right-hand side of the first equality only, and it is not essential for the second equality.}
\end{figure}

Now we are going to show that all other relations in the algebra $H_g^{so(N)}$ follow from the ones already established.
Let us consider an abstract associative algebra $\mathcal A$ over $\C$ generated by elements $M_{ij}$
and by the symmetric group algebra $\C {\cal S}_N$ such that the relations \eqref{com-MS}, \eqref{son},
\eqref{cros-quant} are imposed, as well as $M_{ji}=-M_{ij}$.
We are interested in a linear basis of this algebra.

Consider the monomials composed from deformed angular momenta
and pairwise permutations.
Then using the  commutation relations \eqref{com-MS}, the
permutations can be moved to the right giving rise to
\begin{equation}
\label{monom}
M_{i_1 j_1}^{n_1}\ldots M_{i_k j_k}^{n_k} \sigma
\qquad
\text{with}
\quad i_s<j_s,
\quad n_s > 0,
\quad k \ge 0,
\quad \sigma \in {\cal S}_N.
\end{equation}
The induction on $n=\sum_{s=1}^k n_s$  and the commutation relation \eqref{son}
ensure a rearrangement of the elements $M_{i_s j_s}$
in \eqref{monom} according to some predefined order. Thus we choose the indexes
to be ordered first by the values of $i_s$, then,  if they  equal, by the values of $j_s$:
 \begin{equation}
 \label{order}
 i_1\le \ldots \le i_k,
 \qquad \text{and} \qquad
 i_s=i_{s+1} \;\; \Rightarrow\;\;  j_s<j_{s+1}.
 \end{equation}
However, the ordered monomials are still not lineally independent.
 There is an additional restriction  imposed by the crossing relations
 \eqref{cros-quant}, which survives in the classical case too and can be easily formulated
 using the graphical representation.

 Let us mark points $1, 2,\ldots, N$ on the real line and connect the pairs $(i_s, j_s)$ by $n_s$
 directed semicircles in the upper half-plane.  We get in this way
 the graphical representation of the angular momentum part of the monomial \eqref{monom},
 see an example on Figure~\ref{fig:monom}.
Then the crossing bonds can be untangled by successive application of  the crossing relation,
which is represented on Figure~\ref{fig:crossing}.
Therefore, we arrive at the set  of monomials, which do not have intersecting semicircles,
that is they satisfy the condition
 \begin{equation}
 \label{non-cros}
 i_s < i_{s'} < j_s \quad\Rightarrow\quad
  j_{s'} \le j_s.
\end{equation}

\begin{figure}[t!]
\begin{pspicture}(-2.4,0)(8,3)
    \def\xa{2}\def\dx{2}\def\n{4}
    \uput[180](\xa,0){$M_{12}^2M_{14}M_{23}M_{24}^3M_{34}\;=\;\;$}
    \multido{\inode=1+1,\nd=\xa+\dx}{\n} {
        \psdot[linecolor=blue](\nd,0)
        \pnode(\nd,0){x\inode}
        \uput[-90](\nd,0){$\inode$}
      }
   \psset{nodesep=2pt,ncurv=0.8,ArrowInside=->,arrows=-,
           ArrowInsidePos=0.5,ArrowInsideNo=1,arrowscale=1.2}
  \ncarc[arcangle=20]{x1}{x2}
   \ncarc[arcangle=40]{x1}{x2}
   \ncarc[arcangle=20]{x2}{x3}
   \ncarc[arcangle=30]{x2}{x4}
   \ncarc[arcangle=40]{x2}{x4}
   \ncarc[arcangle=50]{x2}{x4}
   \ncarc[arcangle=20]{x3}{x4}
   \ncarc[arcangle=60]{x1}{x4}
   \ncline[linestyle=dotted,arrows=-,ArrowInsideNo=0]{x1}{x\n}
\end{pspicture}
\caption{\label{fig:monom} Graphical representation of a sample monomial, which does not contain intersecting
angular momentum bonds.}
\end{figure}

%

Let $V$ be a vector space of dimension $N$ with the basis $e_1,\ldots, e_N$. Let $\Lambda^2V$ be the second exterior power of $V$. Consider the tensor algebra $T(\Lambda^2V)$ and its smash product with the symmetric group algebra $\C {\cal S}_N$ where group elements act on $V$ and hence on $\Lambda^2 V$ by permuting the basis elements. We view operators $M_{ij}$ as elements of this smash product $T(\Lambda^2 V)\mathop{\#} \C {\cal S}_N$ via the mapping $M_{ij} \to (e_i \otimes e_j - e_j \otimes e_i) e$, where $e\in {\cal S}_N$ is the identity element.

\begin{thm}
\label{thmalg}
The deformation $\H_g so(N)$ is an associative algebra over $\C$ 
which is the quotient of the algebra $T(\Lambda^2 V)\mathop{\#} \C {\cal S}_N$ over the relations  \eqref{son}, \eqref{cros-quant}.
The set of monomials \eqref{monom} with the restrictions \eqref{order}, \eqref{non-cros}
gives a basis of the algebra $\H_g so(N)$ for  any  $g\in \C$.
\end{thm}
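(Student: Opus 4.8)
The plan is to establish the theorem in two parts: first that the stated relations are complete (i.e.\ $\H_g so(N)$ is genuinely the quotient described, not a proper quotient of it), and second that the proposed monomials form a basis. The second part is the substantive one, and the natural framework is the theory of nonhomogeneous quadratic algebras of PBW type from \cite{BG}. I would introduce the auxiliary algebra $\mathcal A$ as in the excerpt: the smash product $T(\Lambda^2 V)\mathop{\#}\C{\cal S}_N$ modulo the quadratic-linear relations \eqref{son} and \eqref{cros-quant} (the relations \eqref{com-MS} and $M_{ji}=-M_{ij}$ being built into the smash product construction). There is an obvious surjection $\mathcal A \twoheadrightarrow \H_g so(N)$. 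The spanning statement — that the monomials \eqref{monom} satisfying \eqref{order} and \eqref{non-cros} span $\mathcal A$ — is exactly the reduction argument already sketched before the theorem: move all permutations to the right using \eqref{com-MS}, reorder the $M$-factors by induction on total degree using \eqref{son}, then untangle crossings using \eqref{cros-quant}; one must check this rewriting terminates, e.g.\ by a monomial order argument on the sequences of index pairs. So the real content is \emph{linear independence} of these spanning monomials.

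For linear independence I would invoke the Braverman--Gaitsgory / Diamond Lemma criterion: the associated graded algebra (killing the linear parts of \eqref{son} and \eqref{cros-quant}, i.e.\ setting $g=0$ and dropping the degree-one terms, equivalently replacing all $S_{ij}$ in the right-hand sides by $0$) must be the ``classical'' quadratic algebra, and the nonhomogeneous relations must satisfy the Jacobi-type consistency conditions (the $(I3)$ and $(I5)$ conditions of \cite{BG}). Concretely, the homogeneous part is $T(\Lambda^2 V)\#\C{\cal S}_N$ modulo $[M_{ij}^0,M_{kl}^0]=0$ together with the classical crossing relation $\mathop{\text{cyclic}}_{(ijk)}M_{ij}^0 M_{kl}^0=0$ — one identifies this with $\mathrm{Sym}(\Lambda^2 V)/(\text{Plücker-type quadrics})\#\C{\cal S}_N$, and the non-crossing monomials \eqref{non-cros} are a standard monomial basis for it (this is where the planar/non-crossing matching combinatorics does its job — essentially a standard monomial theory computation, and one can cross-check dimensions via the known $g=0$ representation on polynomials, i.e.\ the $so(N)$ enveloping-algebra picture \eqref{son-class}). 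Then the PBW property for all $g$ follows provided the consistency conditions hold; because our relations come from actual operators $M_{ij}$ acting faithfully on $\C[x_1,\dots,x_N]$ (or rather, acting in the rational Cherednik algebra, which is itself PBW), the Jacobi conditions are automatically satisfied — one can even bypass the abstract check by exhibiting the operator representation as a filtered deformation whose associated graded is the $g=0$ algebra, forcing $\dim$ of each filtered piece of $\H_g so(N)$ to be at least that of the $g=0$ piece, hence equal to the number of spanning monomials of that degree.

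Putting the two halves together: the spanning argument gives $\dim(\H_g so(N))_{\le n} \le \#\{\text{monomials \eqref{monom} of degree}\le n\text{ with \eqref{order},\eqref{non-cros}}\}$, while the filtered-deformation/representation argument gives the reverse inequality $\dim(\H_g so(N))_{\le n}\ge \dim(\text{g}=0\text{ algebra})_{\le n} = $ the same count (using that the $g=0$ algebra, the $so(N)$ picture with the extra Plücker relations, has exactly the non-crossing monomials as a basis). Hence equality, the spanning set is a basis, and simultaneously $\mathcal A \to \H_g so(N)$ is an isomorphism so the listed relations are complete. I expect the main obstacle to be the combinatorial core of the $g=0$ case: proving rigorously that the non-crossing monomials \eqref{non-cros} are linearly independent in $\mathrm{Sym}(\Lambda^2 V)$ modulo the classical crossing quadrics — i.e.\ that the untangling rewriting system is confluent and that no further hidden relations appear in degree $\ge 3$. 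This is a Diamond-Lemma confluence check on overlapping crossings (four or five strands), and organizing it cleanly — rather than the routine verification of \eqref{son} and \eqref{cros-quant} themselves — is where the care is needed.
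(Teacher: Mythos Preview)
Your proposal is correct, and one of the routes you sketch --- the filtered/symbol argument --- is essentially what the paper does, but the paper's execution is far more direct than your primary plan. The paper does not invoke the Braverman--Gaitsgory consistency conditions or any Diamond-Lemma confluence check at all. Instead, it works entirely inside the rational Cherednik algebra: since $H_g^{so(N)}$ sits in $H_g({\cal S}_N)$, one takes highest symbols under the standard filtration (where $x_i,\nabla_i$ have degree~$1$ and group elements degree~$0$). The associated graded is $\C[x,p]\mathop{\#}\C{\cal S}_N$, and the symbol of each ordered non-crossing monomial is the corresponding product of classical angular momenta $M_{ij}^{\text{cl}}=x_ip_j-x_jp_i$ times the group element. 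Linear independence of those classical products is then simply \emph{cited} (from \cite{hln}, or noted as ``easy to see'' in the $g=0$ representation) rather than proved by a confluence argument --- it is the classical Pl\"ucker/standard-monomial fact for the cone over the Grassmannian of $2$-planes. Linear independence of the symbols immediately gives linear independence of the original monomials for every $g$, and since the same monomials span the abstract quotient $\mathcal A$ (by the rewriting argument you describe), the surjection $\mathcal A\twoheadrightarrow H_g^{so(N)}$ is forced to be an isomorphism.

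So your worry that ``the main obstacle'' is a degree-by-degree confluence check on overlapping crossings is misplaced: the paper bypasses it completely by working in the concrete operator realisation and appealing to the known classical result. What your abstract BG approach would buy is a self-contained proof that does not rely on an external reference for the classical linear independence, and that would transport more readily to settings where no convenient faithful representation is at hand; what the paper's approach buys is brevity --- the entire proof is three sentences.
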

\begin{proof}
It is easy to see in the representation \eqref{Mkl''} that monomials \eqref{monom} with the restrictions \eqref{order} and \eqref{non-cros},
are linearly independent for $g=0$.
Moreover their classical version, when $M_{ij}$ is replaced with the classical angular momenta
\begin{equation}
\label{Mcl}
M_{ij}^\text{cl}=x_i p_j - x_j p_i,
\end{equation}
are linearly independent too (see e.g. \cite{hln}).
By taking the highest symbols,
considered as elements of the smash product algebra $\C[x,p]\mathop{\#}{\cal S}_N$,
it follows that these monomials are linearly independent for any $g$.
\end{proof}
Theorem \ref{thmalg} and its proof show that $H_g^{so(N)}$ is a flat family of nonhomogeneous quadratic
algebras over $\C {\cal S}_N$ of  Poincar\'e--Birkhoff--Witt type in the terminology  of \cite{BG} (see also \cite{EG}, \cite{AS}).

Note that the constructed non-intersection monomial basis is similar to the well-known overcomplete valence-bond
basis among the singlet states for usual quantum spins, introduced by Temperley and Lieb and
subjecting to the similar crossing relations (with the trivial right-hand  side)
\cite{TemperleyLieb}.
%


\section{ The centre}
\label{sec:centre}
Consider the square of the Dunkl angular momentum operator $\mathbf{M}^2$ given by \eqref{M2}, which is
an analogue  of the second order Casimir element
of the usual $so(N)$ algebra.
Evidently, it is invariant with respect to the permutations $S_{ij}$. However,
it commutes with all the elements $M_{ij}$ only in the limit $g=0$. In order to obtain
a true central element, the operator \eqref{M2} must be supplemented by an element from the
symmetric group algebra. Such an operator appeared in \cite{kuznetsov}. We show that this operator generates the whole centre. Based on Proposition \ref{CMangD} this element can be identified with the angular Calogero--Moser Hamiltonian.


\begin{thm}\label{soncentre}
\label{casthm}
The angular Hamiltonian \eqref{Homega} is invariant with respect
to the  algebra $\H_g so(N)$, that is it belongs to its centre $Z$:
\begin{equation}
\label{com-HomegaM}
[H_\Omega,M_{ij}]=0, \qquad [H_\Omega,S_{ij}]=0.
\end{equation}
 Furthermore, $Z$ is generated by $H_\Omega$ and constants.
\end{thm}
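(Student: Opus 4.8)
The plan is to prove the two parts — centrality of $H_\Omega$, and then $Z=\C[H_\Omega]$ — separately.

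\textit{Centrality.} That $[H_\Omega,S_{ij}]=0$ is immediate from \eqref{Homega}: the element $S=\sum_{k<l}S_{kl}$ is a scalar multiple of the sum of all transpositions, hence central in $\C{\cal S}_N$, while conjugation by $S_{ij}$ merely permutes the indices of the generators by \eqref{com-MS} and so fixes $\mathbf{M}^2=\sum_{k<l}M_{kl}^2$. For $[H_\Omega,M_{kl}]=0$ I would rewrite $H_\Omega$ by means of Proposition \ref{CMangD}: inserting $\mathbf{x}\cdot\boldsymbol{\nabla}=\mathcal{E}+S$ with $\mathcal{E}:=\mathbf{x}\cdot\boldsymbol{\partial}=\mathbf{x}\cdot\boldsymbol{\nabla}-S$ into \eqref{M2x2} and then into \eqref{Homega}, the group-algebra terms collapse and one obtains
\[
H_\Omega=-\tfrac12\,\mathbf{x}^2\boldsymbol{\nabla}^2+\tfrac12\,\mathcal{E}(\mathcal{E}+N-2).
\]
It then suffices that $M_{kl}$ commute with each of $\mathbf{x}^2$, $\boldsymbol{\nabla}^2$, $\mathcal{E}$: the first is a short computation from \eqref{Sx'}; the second follows from it via the automorphism $x_i\mapsto\nabla_i$, $\nabla_i\mapsto-x_i$ of the relations \eqref{com-nabla} (it fixes each $M_{kl}$ and carries $\mathbf{x}^2$ to $\boldsymbol{\nabla}^2$); the third is just the homogeneity of $M_{kl}$ for the Euler grading generated by $\mathcal{E}$. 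Alternatively, staying inside the algebra, one checks from \eqref{son}, \eqref{cros-quant} that $[\mathbf{M}^2,M_{kl}]$ and $[S(S-N+2),M_{kl}]$ both equal $\{[M_{kl},S],S\}-(N-2)[M_{kl},S]$, and hence cancel in \eqref{Homega} without $[M_{kl},S]$ ever being evaluated.

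\textit{Description of $Z$.} For this I would pass to the associated graded algebra for the filtration by degree in the $M_{ij}$. By Theorem \ref{thmalg} it is $A_0\mathop{\#}\C{\cal S}_N$, where $A_0$ is $T(\Lambda^2V)$ modulo the leading parts of \eqref{son} (so that the symbols $\overline M_{ij}$ commute) and of \eqref{cros-quant} (the relations $\mathop{\text{cyclic}}_{(ijk)}\overline M_{ij}\overline M_{kl}=0$). These last relations say precisely that the $2$-form $\overline M=\sum_{i<j}\overline M_{ij}\,e_i\wedge e_j$ is decomposable, $\overline M\wedge\overline M=0$, so $A_0$ is the homogeneous coordinate ring of $\mathrm{Gr}(2,N)$ — equivalently, of the variety of skew matrices in $\mathfrak{so}(N)$ of rank at most $2$. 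The filtration also gives the associated graded a Poisson bracket whose identity component on the $\overline M_{ij}$ is, by the leading part of \eqref{son}, the classical $\mathfrak{so}(N)$ Lie--Poisson bracket. Now if $z\in Z$ has filtration degree $d$, its leading symbol $\overline z$ lies (a) in the associative centre of $A_0\mathop{\#}\C{\cal S}_N$, which equals $A_0^{{\cal S}_N}\subset A_0$ since $A_0$ is a domain on which ${\cal S}_N$ acts faithfully, and (b), because $[z,M_{kl}]=0$ kills the top-degree part of $zM_{kl}-M_{kl}z$, it Poisson-commutes with every $\overline M_{kl}$; taking the identity component of (b), $\overline z$ is an ${\cal S}_N$-invariant Casimir of the $\mathfrak{so}(N)$-structure on $A_0$.

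The decisive step is then that the symplectic leaves of $A_0$ — the $SO(N)$-coadjoint orbits of skew matrices of rank $\le 2$ — are separated by the single invariant $\overline{\mathbf{M}}^2=\sum_{i<j}\overline M_{ij}^2$, the symbol of $\mathbf{M}^2$: all higher $\mathfrak{so}(N)$-Casimirs restrict on such matrices to polynomials in $\overline{\mathbf M}^2$, and the Pfaffian to $0$. Hence the Casimir subalgebra of $A_0$ is $\C[\overline{\mathbf M}^2]=\mathrm{gr}\,\C[H_\Omega]$, so $\mathrm{gr}\,Z\subseteq\mathrm{gr}\,\C[H_\Omega]$; together with $\C[H_\Omega]\subseteq Z$ from the first part, an induction on the filtration degree yields $Z=\C[H_\Omega]$, which is the assertion. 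I expect the main work to be exactly this last point: pinning down the common centre of the associative and Poisson structures on $\mathrm{gr}\,\H_g so(N)$, i.e.\ checking that the crossing relations collapse the ring of $\mathfrak{so}(N)$-Casimirs down to $\C[\overline{\mathbf M}^2]$ (a one-degree-of-freedom invariant-theoretic statement about minimal-rank orbits).
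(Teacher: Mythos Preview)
Your proposal is correct and follows essentially the same route as the paper: for centrality you use $[M_{ij},\mathbf{x}^2]=0$ and, via the Fourier automorphism, $[M_{ij},\boldsymbol{\nabla}^2]=0$ (the paper does the same, packaging the conclusion through $[H,M_{ij}]=0$ and the radial decomposition \eqref{Hspherical} rather than your direct rewriting of $H_\Omega$); for the centre you pass to the associated graded, eliminate the group components of the top symbol, and reduce to the fact that the Poisson centre of the classical angular-momentum algebra is $\C[\sum (M_{ij}^{\mathrm{cl}})^2]$, which the paper simply cites from \cite{Weyl} while you sketch the rank-$\le 2$ coadjoint-orbit argument. The only slip is in your ``alternatively'' remark: a sign check gives $[S(S-N+2),M_{kl}]=-\{[M_{kl},S],S\}+(N-2)[M_{kl},S]$, not the expression you wrote, but since this is offered only as a side route it does not affect the argument.
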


\begin{proof}
The second equation in \eqref{com-HomegaM} reflects the fact that $S$
lies in the center of the symmetric group algebra: $[S,S_{ij}]=0$.

First we prove that the original Hamiltonian is invariant with respect to the algebra $\H_g so(N)$.
Since its invariance with respect to the particle permutations is evident, it remains
to verify
\begin{equation}
\label{com-HM}
[H,M_{ij}]=[\boldsymbol{\nabla}^2,M_{ij}]=0.
\end{equation}
Indeed, as is argued in the proof of Proposition~\ref{CMangD}, the relation
\eqref{Sx'} remains valid upon the substitution $x_i\to\nabla_i$ :
$$
\sum_k\{ S_{ik},\nabla_k\}= 2\nabla_i.
$$
As a consequence,  we obtain the relation \eqref{com-HM}
using also the commutations \eqref{com-nabla}:
\begin{equation}
\label{nabla2-M}
{[}\boldsymbol{\nabla}^2,M_{ij}]=
\mathop{\text{asym}}_{{[ij]}}\sum_k[ \nabla_k^2,x_i\nabla_j]
=\mathop{\text{asym}}_{{[ij]}}\sum_k\{S_{ik} \nabla_j,\nabla_k\}
=2\mathop{\text{asym}}_{{[ij]}}\nabla_i\nabla_j =0,
\end{equation}
where we have used the index antisymmetrization
\eqref{asym-def} for the convenience.

Then we note that the Dunkl angular momentum operators \eqref{Mkl''}, \eqref{nabla}
depend on angular coordinates only.  This becomes  evident
after the application of the involutive antiautomorphism  of the Cherednik algebra
with the mutual exchange of  $x_i$ and  $\nabla_i$,  to  the equation \eqref{com-HM}:
\begin{equation}
\label{x2-M}
[\mathbf{x}^2,M_{ij}]=[r^2,M_{ij}]=0.
\end{equation}
Together with \eqref{com-HM} and the expression of the Hamiltonian in
spherical coordinates  \eqref{Hspherical}, it
implies that  the angular Hamiltonian \eqref{Hdunkl}  preserves the Dunkl angular momentum,
i.e. the first relation in \eqref{com-HomegaM}.

Consider now an arbitrary element $B \in Z$. Let $B_0\in \C[x,p]\mathop{\#}{\cal S}_N$ be its highest symbol $Sym(B)$ in the smash product algebra of polynomials in coordinates $x_1,\ldots, x_N$ and momenta $p_1, \ldots, p_N$, and the symmetric group. Let $d\ge 0$ be its degree in $x$ or $p$ variables. Note that $B_0$ does not contain non-trivial group elements. Indeed, consider the decomposition of $B$ with regard to the basis as in Theorem \ref{thmalg}. Suppose that a term with $\sigma\in {\cal S}_N$, $\sigma\ne 1$ arises in the highest degree. By taking $M_{ij}$ such that $\sigma(M_{ij})\ne M_{ij}$ we get that
$$
\deg Sym[B, M_{ij}]=d+1,
$$
so the commutator cannot be zero.

Consider now the commutator $[B, M_{ij}]$ for an arbitrary $M_{ij}$. Note that the terms
in its symbol of degree $d$, which do not contain non-trivial group elements, coincide with the  classical Poisson commutator of $B_0$ and the usual
classical angular momentum  \eqref{Mcl}, 
which is zero.
The Poisson centre of the algebra generated by  the classical angular momenta $M_{ij}^\text{cl}$ is generated by $\sum_{i<j} (M_{ij}^\text{cl})^2$ (this can be derived, for instance, from the results of \cite{Weyl}). The rest follows by subtracting the corresponding power of $H_\Omega$ from $B$ and by inductive reducing the degree of $B$.
\end{proof}

%
%
%
%

\section{Example: deformation of $so(3)$ algebra}
Consider now the simplest case of the  $H_g^{so(3)}$ algebra, which describes a deformation of
the usual three-dimensional quantum angular momentum operators. Define 
\begin{gather*}
\label{Mi}
M_1=M_{23}, \qquad M_2=M_{31}, \qquad M_3=M_{12},
\\
\label{Si}
S_1=S_{23}, \qquad S_2=S_{31}, \qquad S_3=S_{12}.
\end{gather*}
Then the commutation relations \eqref{com-ss} and \eqref{com-MS} between the generators
take the form
\begin{gather*}
\label{com-ss-3}
S_1S_2 = S_3S_1, \qquad  S_1^2=g^2,
\\
\label{com-MS-3}
 \{S_1,M_1\}=0, \qquad S_1M_2=-M_3S_1,  \qquad S_1M_3=-M_2S_1
\end{gather*}
and their cyclic permutations.

The commutation relations \eqref{son} are equivalent to
\begin{equation*}
\label{so3}
 [M_1,M_2]=-M_3+(M_3-M_2)S_1+(M_3-M_1)S_2
\end{equation*}
and other relations are obtained by the cyclic permutations of three indexes. They can be rewritten in
a more symmetric form,
\begin{equation*}
\label{so3-2}
 [M_1,M_2]=-M_3+\{M_3,S_1+S_2\}.
\end{equation*}

The Casimir element, which is proportional to the angular Hamiltonian,
 has the following form in terms of introduced operators:
\begin{equation*}
H_\Omega \sim {M_1}^2+{M_2}^2+{M_3}^2 - S(S-1), 
\qquad
S=S_1+S_2+S_3.
\end{equation*}
It extends the spin square operator.

Note that for $so(3)$ case the crossing relations are either trivial
or reduce to the commutation relations.

\section{$gl(N)$ case}
\label{secg}

The algebra $\H_g so(N)$ can be included into a bigger subalgebra $\Hggl(N)$ of the rational Cherednik algebra.
We define it to be generated by the operators
\begin{equation}
\label{Ekl}
E_{kl}=a^+_k a_l 
\end{equation}
and by the group algebra $\C {\cal S}_N$ ($1\le k,l\le N)$.  The operators can be considered acting on meromorphic functions. We also have  $M_{kl}=E_{kl}-E_{lk}$.

The algebra can be realised as a subalgebra of the rational Cherednik algebra $H_g({\cal S}_N)$ via a conjugation which maps $\nabla_k \to {\mathcal D}_k$.
Further, there are more general but equivalent choices of the generators $\widetilde E_{kl} = (\alpha x_k + \beta \nabla_k)(\gamma x_l + \delta \nabla_l)$, where parameters $\a, \beta, \gamma, \delta$ are such that $\a \delta- \beta \gamma  \ne 0$.  They define  isomorphic algebras. Indeed,  the operators $\alpha x_k + \beta \nabla_k$ with different indecies $k$ pairwise  commute. Therefore these  generators correspond to different choices of generators of the rational Cherednik algebra. In particular, the algebra $\Hggl(N)$ is isomorphic to the subalgebra of the rational
 Cherednik algebra $H_g({\cal S}_N)$ which is generated by the elements $x_k {\mathcal D}_l$ and by the group algebra $\C{\mathcal S}_N$.


%
%

Generators $E_{kl}$ obey the standard commutation relations with the permutation operators:
\begin{equation}
\label{com-ES}
[S_{ij},E_{kl}]=0,
\qquad
S_{ij} E_{ij} = E_{ji} S_{ij},
\qquad
S_{ij}E_{ik}=E_{jk}S_{ij},
\qquad
S_{ij}E_{ki}=E_{kj}S_{ij},
\end{equation}
where all indexes are pairwise different.

As in the case of  the usual $gl(N)$, the rising and lowering generators are Hermitian conjugate
of each other:
\begin{equation}
\label{herm-E}
 E_{ij}^+=E_{ji}, \qquad S_{ij}^+=S_{ij}.
\end{equation}

We start from the analogue of the crossing relation \eqref{cros-quant}, which in this case
takes the following form:
\beq{crosgl}
E_{ij} E_{kl}- E_{il} E_{kj}= E_{il}S_{kj}-E_{ij}S_{kl}.
\eeq
For different values of $i,j,k$ this relation follows immediately from \eqref{Ekl} and
\eqref{ai}, \eqref{com-sx}. The nontrivial case is when $j=k\ne l$, for which we have:
\[
\begin{split}
E_{ij} E_{kl}- E_{il} E_{kj}&=a_i^+S_{jj}a_l - a_i^+S_{lj}a_j = E_{il}S_{jj}+(E_{il}-E_{ij})S_{lj}- E_{il}S_{lj}
\\
&=E_{il}S_{jj}-E_{ij}S_{lj},
\end{split}
\]
where in the second equality above the relation \eqref{Sjjai} was applied.

Taking the Hermitian conjugate of \eqref{crosgl}, we  get an equivalent relation,
with antisymmetrization over the first indexes of $E_{ij}$ generators. Both relations
can be written down in the following compact form:
\begin{gather}
\label{crosgl2}
\mathop{\text{asym}}_{[jl]} E_{ij} (E_{kl}+S_{kl})=\mathop{\text{asym}}_{[ik]} (E_{ij} +S_{ij})E_{kl}=0.
\end{gather}
Their combination with suitably chosen indexes results in the commutation relation
among the deformed $gl(N)$ generators
\begin{eqnarray}
\label{relgln1}
[E_{ij}, E_{kl}]= E_{il}S_{jk} - S_{il} E_{kj} + [S_{kl}, E_{ij}],
\end{eqnarray}
which holds for any index values. Its antisymmetrisation over
$i,j$ and $k,l$ immediately gives the commutation relations \eqref{son} for
deformed angular momenta.
When all indexes differ, the commutator  in the right-hand side of \eqref{relgln1}
disappears, and we arrive at a natural extension of $gl(N)$ commutation relations with metrics
tensor $\delta_{ij}$ replaced by the pairwise permutation operator $S_{ij}$, as for the $so(N)$ case considered above.
Some other commutation relations are less straightfowrward. Below we write down all particular cases of
the commutation relation \eqref{relgln1} provided that the values of all four indexes differ pairwise.
\begin{equation}
\label{relgln2}
\begin{gathered}
{[}E_{ij}, E_{kl}]= E_{il}S_{jk} - E_{kj}S_{il},
\qquad
[E_{ii}, E_{kl}]=E_{il}S_{ik}-E_{kl}S_{il},
\\
[E_{ii}, E_{jj}]=(E_{ii}-E_{jj})S_{ij},
\qquad
[E_{ij}, E_{jl}]=E_{il}S_{jj}+(E_{il}-E_{ij})S_{jl}-E_{jj}S_{il},
\\
[E_{ii}, E_{ij}]=E_{ij}S_{ii}-E_{ii}S_{ij},
\qquad
[E_{ij}, E_{kj}]=E_{ik}S_{jk}-E_{ki}S_{ij},
\\
[E_{ij}, E_{ji}]=E_{ii}S_{jj}-E_{jj}S_{ii}+(E_{ii}-E_{jj}-E_{ij}+E_{ji})S_{ij},
\end{gathered}
\end{equation}
and the equalities from the third line in \eqref{relgln2} must also be supplemented by their conjugate relations.
As for  $\H_g so(N)$  case, considered above, any monomial in $E_{ij}$ and $S_{ij}$ can be
expressed in the following form by moving all permutations to the right hand side:
\beq{basgln}
E_{i_1 j_1}^{n_1}\ldots E_{i_k j_k}^{n_k} \sigma, \qquad \sigma\in \mathcal{S}_N,
\eeq
where $k\ge 0, n_1, \ldots, n_k >0$.
Due to relations \eqref{crosgl}--\eqref{relgln1}, any such monomial can be linearly expressed in
terms of the monomials, where both index sets  $\{i_s\}$ and $\{j_s\}$ are ordered
 in some way, for example
 \begin{equation}
\label{sort}
i_1\le \ldots\le i_k
\qquad\text{and}\qquad
j_1\le \ldots\le j_k.
\end{equation}

Using the Wicks's theorem, one can decompose \eqref{basgln} as was described in Section~\ref{sec:crossing}.
In the  example below, only the highest-order term 
 in this normal ordering
decomposition is shown
$$
E_{11}^{n_1}E_{12}^{n_2}E_{22}^{n_3}E_{32}^{n_4}E_{33}^{n_5}E_{34}^{n_6}
\;=\;(a_1^+)^{n_1+n_2}(a_2^+)^{n_3}(a_3^+)^{n_4+n_5+n_6}a_1^{n_1}a_2^{n_2+n_3+n_4}a_3^{n_5}a_4^{n_6}\;\;+\;\; \ldots.
$$
Here the lower-order terms   in  $a_i^\pm$  are indicated by the dots.

The monomials \eqref{basgln} with the restriction \eqref{sort} are linearly independent, since
it is easy to see that there is a one-to-one correspondence between them and their ``highest symbols"
$$
a^{+\,n_1}_{i_1}\ldots a^{+\,n_k}_{i_k}\; a_{j_1}^{n_1}\ldots a_{j_k}^{n_k},
$$
and because of the PBW theorem for the rational Cherednik algebra \cite{EG}.

%

Let $V$ be a vector space of dimension $N$ with the basis $e_1,\ldots, e_N$. Consider the tensor algebra $T(V\otimes V)$ and its smash product with the symmetric group algebra $\C {\cal S}_N$ where group elements act on $V$ and hence on $V \otimes V$ by permuting the basis elements. We view operators $E_{ij}$ as elements of this smash product $T(V\otimes V)\mathop{\#} \C {\cal S}_N$ via the mapping $E_{ij} \to (e_i \otimes e_j) e$, where $e\in {\cal S}_N$ is the identity element.

Similarly to the case of $\H_g so(N)$ we have the following statement which implies that $\Hggl(N)$ is a flat family of nonhomogeneous quadratic algebras of  Poincar\'e--Birkhoff--Witt type.
\begin{thm}
The deformation $\Hggl(N)$ is an associative algebra over $\C$ 
which is the quotient of the algebra $T(V\otimes V)\mathop{\#} \C {\cal S}_N$ over the relations  \eqref{crosgl2}.
The set of monomials \eqref{basgln} with the restriction \eqref{sort}
gives a basis of the algebra $\Hggl(N)$ for  any  $g\in \C$.
\end{thm}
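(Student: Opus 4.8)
The plan is to follow the proof of Theorem~\ref{thmalg} almost verbatim. Let $\mathcal{B}$ denote the quotient of $T(V\otimes V)\mathop{\#}\C\mathcal{S}_N$ by the two-sided ideal generated by \eqref{crosgl2}. Inside $\mathcal{B}$ the $\mathcal{S}_N$-equivariance \eqref{com-ES} holds automatically, being part of the smash-product structure; so does \eqref{crosgl}, which is a reformulation of the first relation in \eqref{crosgl2}; and therefore also the commutation relation \eqref{relgln1}, which the excerpt derives from \eqref{crosgl2}. Since the operators $E_{ij}=a^+_ia_j$ of $H_g^{gl(N)}$ satisfy \eqref{com-ES} and \eqref{crosgl2}, there is a surjective algebra homomorphism $\mathcal{B}\to H_g^{gl(N)}$ which is the identity on $\C\mathcal{S}_N$ and sends $E_{ij}\mapsto a^+_ia_j$. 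It then suffices to show that the monomials \eqref{basgln} with the restriction \eqref{sort} span $\mathcal{B}$ and that their images in $H_g^{gl(N)}$ are linearly independent: the former makes them a spanning set, the latter makes the homomorphism injective, and together they become a basis.

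For the spanning statement I would repeat, now inside $\mathcal{B}$, the reduction already sketched between \eqref{basgln} and \eqref{sort}. Any word in the $E_{ij}$ and in $\C\mathcal{S}_N$ is brought to the form \eqref{basgln} by pushing all group elements to the right via the smash-product relations. Then, by induction on the total degree $n=\sum_s n_s$, the relation \eqref{relgln1} reorders the factors $E_{i_sj_s}$ so that the sequence of first indices is weakly increasing, modulo terms of strictly smaller degree, and the crossing relation \eqref{crosgl} untangles the remaining violations of \eqref{sort}, each application of it replacing a monomial by monomials that are either closer to the sorted form or of strictly smaller degree. The induction closes, so the monomials \eqref{basgln}, \eqref{sort} span $\mathcal{B}$, hence also $H_g^{gl(N)}$.

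For linear independence I would use the rational Cherednik algebra as indicated in the excerpt. Realise $H_g^{gl(N)}\subset H_g(\mathcal{S}_N)$ through $E_{ij}=a^+_ia_j$ and filter $H_g(\mathcal{S}_N)$ by total degree in the generators $a^\pm_i$. By the PBW theorem \cite{EG}, $\mathrm{gr}\,H_g(\mathcal{S}_N)\cong\C[a^+_1,\dots,a^+_N,a_1,\dots,a_N]\mathop{\#}\C\mathcal{S}_N$ with the $a^\pm$ now commuting, independently of $g$, and the elements $\prod_v(a^+_v)^{p_v}\prod_w(a_w)^{q_w}\,\sigma$ form part of a basis of it. Normal-ordering $E_{i_1j_1}^{n_1}\cdots E_{i_kj_k}^{n_k}\sigma$, i.e. moving every $a^+$ to the left, changes only the lower-degree part, so the top symbol of this monomial is $\prod_v(a^+_v)^{N^+_v}\prod_w(a_w)^{N^-_w}\,\sigma$ with $N^+_v=\sum_{s:\,i_s=v}n_s$ and $N^-_w=\sum_{s:\,j_s=w}n_s$. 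Under the restriction \eqref{sort} the data $(i_s,j_s,n_s)_s$ is recovered uniquely from the two compositions $(N^+_v)_v$ and $(N^-_w)_w$: expanding each factor with its multiplicity produces a sequence of $n$ pairs whose first and second coordinates are separately weakly increasing, and such a sequence is determined by the multiset of its first coordinates together with the multiset of its second coordinates, hence so is the run-length-encoded chain. Thus distinct monomials \eqref{basgln}, \eqref{sort} have distinct, and therefore linearly independent, top symbols, so by the usual filtered/associated-graded argument the monomials themselves are linearly independent in $H_g(\mathcal{S}_N)$ for every $g\in\C$. Together with the spanning statement this proves the theorem.

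The only genuinely new point compared with Theorem~\ref{thmalg}, where the non-crossing condition \eqref{non-cros} played the analogous role, is this combinatorial bookkeeping: one must verify that the reduction in the spanning step can always be driven all the way to \eqref{sort} (not merely to an ordering of one of the two index sequences), and, dually, that \eqref{sort} selects exactly one monomial in each top-symbol class, so that the spanning set of the second paragraph and the linearly independent set of the third coincide. Everything else is the standard filtered/associated-graded argument already used for $H_g^{so(N)}$; in particular no relations beyond \eqref{crosgl2}, together with its consequences \eqref{crosgl} and \eqref{relgln1}, are needed.
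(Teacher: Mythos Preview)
Your proof is correct and follows essentially the same approach as the paper: the spanning argument via \eqref{crosgl}--\eqref{relgln1} and the linear-independence argument via highest symbols and the PBW theorem for $H_g(\mathcal{S}_N)$ are exactly what the paper sketches in the paragraphs immediately preceding the theorem statement (there is no separate proof block). Your write-up merely makes explicit the combinatorial bijection between sorted monomials and their top symbols that the paper asserts with ``it is easy to see that there is a one-to-one correspondence''.
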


From the commutation relations \eqref{relgln1} it is easy to deduce that the element
\begin{equation}
\label{rho}
\rho=\sum_i E_{ii} - S = \sum_i a^+_ia_i-S,
\end{equation}
$S=\sum_{i<j} S_{ij}$, commutes with the entire algebra $\Hggl(N)$. It describes the Calogero Hamiltonian \eqref{Hpm}
in confined oscillator potential \cite{calogero69}, which can be seen from \eqref{Hdunkl} and \eqref{a-pm}
\cite{brink92}:
\begin{equation}
\begin{gathered}
\rho = H +\frac12\mathbf{x}^2-\frac{N}{2},
\\
\text{Res}_\pm\rho= H_\pm+\frac12\mathbf{x}^2-\frac{N}{2}.
\quad
\end{gathered}
\end{equation}
Therefore, the algebra  $\Hggl(N)$ describes the invariants of the Calogero-Moser Hamiltonian
$$
\frac12\sum_{i=1}^N (-\partial_i^2 + x_i^2) + \sum_{i<j} \frac{g(g-s_{ij})}{(x_i-x_j)^2},
$$
extended out of the space of identical particles.

Finally we note that $\rho$  and constants generate the centre of $\Hggl(N)$. This follows from the following lemma similarly to the proof of Theorem \ref{soncentre}.
\begin{lemma}
Consider the classical Poisson algebra generated by $E_{ij}^{cl}=x_i p_j$ ($1\le i,j\le N$) with the standard Poisson bracket. Then $C=\sum_{i=1}^N x_i p_i$ generates the centre $Z$ of the algebra.
\end{lemma}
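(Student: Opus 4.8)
The plan is to work entirely inside the Poisson algebra $\mathcal{A}$ generated by the $E_{ij}^{cl}=x_ip_j$ and show that its Poisson centre is the polynomial algebra $\C[C]$ with $C=\sum_i x_ip_i$. First I would record that $C$ is indeed central: the classical limit of \eqref{relgln1} (or a direct computation) gives $\{E_{ij}^{cl},E_{kl}^{cl}\}=E_{il}^{cl}\delta_{jk}-E_{kj}^{cl}\delta_{il}$, and summing over $i=j$ shows $\{C,E_{kl}^{cl}\}=0$ for all $k,l$; hence $\{C,a\}=0$ for every $a\in\mathcal{A}$, so $\C[C]\subseteq Z$.

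For the reverse inclusion, the key observation is that $\mathcal{A}$ is a graded Poisson subalgebra of $\C[x,p]$: each $E_{ij}^{cl}$ is bihomogeneous of degree $(1,1)$ in $(x,p)$, so $\mathcal{A}=\bigoplus_{d\ge 0}\mathcal{A}_d$ where $\mathcal{A}_d$ is spanned by products of $d$ generators, and $\{\mathcal{A}_d,\mathcal{A}_e\}\subseteq\mathcal{A}_{d+e-1}$. A central element therefore decomposes into homogeneous central pieces, and it suffices to show that each homogeneous central element of degree $d$ is a scalar multiple of $C^d$. Now $\mathcal{A}_d$ is spanned by monomials $x_{i_1}p_{j_1}\cdots x_{i_d}p_{j_d}$, i.e. by expressions $\prod x_i^{a_i}\prod p_j^{b_j}$ with $\sum a_i=\sum b_j=d$ (every $(d,d)$-bihomogeneous monomial of this shape lies in $\mathcal{A}_d$). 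So I must identify, among all polynomials that are homogeneous of degree $d$ in $x$ and of degree $d$ in $p$, those Poisson-commuting with every $x_ip_j$.

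To pin these down I would use the two sub-families of generators that are easy to bracket against: the "diagonal" elements $E_{ii}^{cl}=x_ip_i$ and, say, $E_{ij}^{cl}=x_ip_j$ for $i\ne j$. Commuting with each $x_ip_i$ forces a central element $P$ to be invariant under the torus action $x_i\mapsto t_ix_i$, $p_i\mapsto t_i^{-1}p_i$, hence $P$ is a polynomial in the products $u_i:=x_ip_i$. Next, commuting with $E_{ij}^{cl}=x_ip_j$ (for $i\ne j$): one computes $\{x_ip_j,u_k\}=\delta_{jk}x_ip_j-\delta_{ik}x_ip_j$, so $\{x_ip_j,P\}=(\partial_{u_i}P-\partial_{u_j}P)\,x_ip_j$, and this vanishes for all $i\ne j$ iff $\partial_{u_i}P$ is independent of $i$, i.e. $P$ is a symmetric polynomial whose partials all coincide — which forces $P$ to be a polynomial in $u_1+\cdots+u_N=C$. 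Combined with homogeneity this gives $P\in\C\cdot C^d$, completing the argument; alternatively one can quote the classical first-fundamental-theorem description of $GL$-type invariants as in \cite{Weyl}, exactly as was done for the $so(N)$ case in the proof of Theorem~\ref{soncentre}.

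The main obstacle is not the centre computation itself — that is a short torus-invariance argument — but the clean passage from the Lemma to the statement that $\rho$ generates $Z(\Hggl(N))$. As in the proof of Theorem~\ref{soncentre}, one takes $B\in Z$, forms its highest symbol $Sym(B)\in\C[x,p]\mathop{\#}{\cal S}_N$ in the filtration coming from the basis \eqref{basgln}; one first argues that $Sym(B)$ carries no nontrivial group element (otherwise choosing $E_{ij}$ with $\sigma(E_{ij})\ne E_{ij}$ raises the degree of the symbol of $[B,E_{ij}]$, contradicting centrality), so $Sym(B)\in\C[x,p]$; then the degree-$d$ part of the symbol of $[B,E_{ij}]$ is the classical Poisson bracket $\{Sym(B),x_ip_j\}$, which must vanish, so $Sym(B)$ lies in the classical centre and hence equals a scalar times $C^d$ by the Lemma; finally one subtracts the appropriate multiple of $\rho^{\,d}$ and induces on the degree. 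The one point requiring a little care is that the classical generators relevant here are $E_{ij}^{cl}=x_ip_j$ — matching the $g\to 0$ symbols of $E_{ij}$ — so the Lemma is stated for precisely the right algebra, and the induction terminates because each subtraction strictly lowers the filtration degree.
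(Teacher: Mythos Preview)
Your argument is correct and matches the paper's proof almost exactly: the paper likewise first uses the diagonal generators $E_{ii}^{cl}$ (i.e.\ the case $i=j$ of $\{x_ip_j,f\}=(x_i\partial_{x_j}-p_j\partial_{p_i})f=0$) to reduce to functions of $y_i=x_ip_i$, and then reads off $(\partial_{y_i}-\partial_{y_j})f=0$ from the off-diagonal generators to conclude $f=f(C)$. Your additional remarks on the grading and on the passage to the quantum centre via highest symbols are extra detail beyond what the paper writes out, but they are in line with how the paper indicates the Lemma is to be used.
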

\begin{proof}
Let $f(x,p)\in Z$. Since the Poisson bracket
\begin{equation}\label{clcom}
\{x_i p_j, f\} = (x_i \p_{x_j} - p_j \p_{p_i}) f=0
\end{equation}
it follows for $i=j$ that $f$ is a function of the variables $y_i=x_i p_i$ $(1\le i \le N)$. In these variables, the relation  \eqref{clcom} implies that $(\p_{y_i}-\p_{y_j})f=0$ so that $f$ is a function of $C=\sum y_i$ as stated.
\end{proof}


\section{Coxeter groups generalisations}

The previous results can be generalised to the case when the symmetric group ${\cal S}_N$ is replaced with a finite reflection group.

 Let $\mathcal R$ be a Coxeter root system in the Euclidean space $V \cong \R^N$ with the inner product denoted as $(\cdot, \cdot)$  (see \cite{Hum}). Let $e_1, \ldots, e_N$ be the standard basis. Denote by $s_\a$  the orthogonal reflection with respect to the hyperplane $(\a, x)=0$:
  $$
  s_\alpha x= x-\frac{2(\alpha,x)}{(\alpha,\alpha)}\alpha,
  $$
where  $x=\sum x_i e_i \in V$.   Let $W$ be the corresponding finite Coxeter group generated by reflections $s_\a$, $\a\in \mathcal R$. Let $g: {\mathcal R}\to \C$ be a $W$-invariant function, denote $g_\alpha = g(\alpha)$.

 The Dunkl operators take the form
\begin{equation}
\label{dunklgen}
\nabla_\xi = \partial_\xi - \sum_{\a \in {\mathcal R}_+} \frac{g_\a (\alpha, \xi)}{(\a,x)}s_\a,
\end{equation}
where $\xi \in \R^N$, and  ${\mathcal R}_+$ is a positive half of the root system.
Given $\xi, \eta \in \R^N$ we define the element $S_{\xi \eta}$ of the group algebra $\C W$ by
\beq{sksieta}
S_{\xi \eta} = (\xi, \eta) + \sum_{\a\in {\mathcal R}_+} \frac{2 g_\a (\a, \xi)(\a, \eta)}{(\a,\a)} s_\a.
\eeq
This element comes from the commutation
$$
[\nabla_\xi, (x, \eta)]=  [\nabla_\eta, (x, \xi)] = S_{\xi \eta}.
$$
Define the Dunkl angular momentum operator
\begin{equation}
\label{Mxi-eta}
M_{\xi \eta} = (x,\xi) \nabla_\eta - (x, \eta) \nabla_{\xi}.
\end{equation}
Note that
 for any $w\in W$ and  $\xi,\eta \in \R^N$ the following relations hold:
\begin{equation}
\label{com-Mw}
\begin{gathered}
w M_{\xi \eta}= M_{w(\xi) w(\eta)}w,
\qquad
M_{\xi \eta} = \sum_{i,j} \xi_i \eta_j M_{e_ie_j},
\qquad
M_{\xi \eta} = - M_{\eta \xi}.
\end{gathered}
\end{equation}
The {\it Dunkl angular momenta algebra} $H_g^{so(N)}(W)$  is defined to be  generated by the operators $M_{\xi \eta}$, and by the group algebra $\C W$. It can be considered as a subalgebra of the rational Cherednik algebra $H_g(W)$ via the homomorphism which maps $M_{e_i e_j} \to x_i {\mathcal D}_j - x_j {\mathcal D}_i$.
\begin{prop}
The Dunkl angular momenta satisfy
\beq{amcoxcom}
[M_{\xi \eta}, M_{\varphi \psi}] = M_{\xi \psi} S_{\eta \varphi} +M_{\eta \varphi} S_{\xi \psi} - M_{\xi \varphi} S_{\eta \psi} - M_{\eta \psi} S_{\xi \varphi},
\eeq
\beq{amcoxcross}
M_{\xi \eta} M_{\varphi \psi} + M_{\eta \varphi}  M_{\xi \psi} + M_{\varphi \xi} M_{\eta \psi}  = M_{\varphi \xi} S_{\eta \psi} +M_{\xi \eta} S_{\varphi \psi} + M_{\eta \varphi} S_{\xi \psi}
\eeq
for any $\xi, \eta, \varphi, \psi \in \R^N$.
\end{prop}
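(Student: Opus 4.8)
The plan is to run, with vector arguments in place of indices, the creation--annihilation computations behind Proposition~\ref{prop1} and the crossing relation of Section~\ref{sec:crossing}; the single new feature is that the terms produced when a group-algebra element is moved past a (de)creation operator now assemble into a totally symmetric tensor, which makes everything work uniformly for an arbitrary finite reflection group. First I would introduce $a_\xi^\pm=\tfrac1{\sqrt2}\big((x,\xi)\mp\nabla_\xi\big)$, which are linear in $\xi$, record $M_{\xi\eta}=a_\xi^+a_\eta-a_\eta^+a_\xi$, and note the relations $[a_\xi,a_\eta^+]=S_{\xi\eta}$, $[a_\xi^\pm,a_\eta^\pm]=0$, $w\,a_\xi^\pm=a^\pm_{w(\xi)}w$ for $w\in W$. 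The key lemma is the rule for moving $S_{\xi\eta}$ past an $a_\zeta^\pm$: from $s_\alpha a_\zeta^\pm=a^\pm_{s_\alpha(\zeta)}s_\alpha=\big(a_\zeta^\pm-\tfrac{2(\alpha,\zeta)}{(\alpha,\alpha)}a_\alpha^\pm\big)s_\alpha$ and \eqref{sksieta},
\[
S_{\xi\eta}\,a_\zeta^\pm=a_\zeta^\pm\,S_{\xi\eta}-T^\pm_{\xi\eta\zeta},\qquad
T^\pm_{\xi\eta\zeta}=\sum_{\alpha\in\mathcal{R}_+}\frac{4g_\alpha(\alpha,\xi)(\alpha,\eta)(\alpha,\zeta)}{(\alpha,\alpha)^2}\,a^\pm_\alpha s_\alpha ,
\]
and the crucial observation is that $T^\pm_{\xi\eta\zeta}$ is totally symmetric in $\xi,\eta,\zeta$. (Both identities are multilinear in $\xi,\eta,\varphi,\psi$, so it would also suffice to verify them on the basis vectors $e_i$, but the vector computation below is uniform in all cases.)

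For \eqref{amcoxcom} I would repeat the argument of Proposition~\ref{prop1}. The relations above give $[a_\xi^+a_\eta,a_\varphi^+a_\psi]=a_\xi^+S_{\eta\varphi}a_\psi-a_\varphi^+S_{\xi\psi}a_\eta$, and antisymmetrizing over $[\xi\eta]$ and $[\varphi\psi]$ yields
\[
[M_{\xi\eta},M_{\varphi\psi}]=\mathop{\text{asym}}_{[\xi\eta],[\varphi\psi]}\big(a_\xi^+S_{\eta\varphi}a_\psi-a_\psi^+S_{\eta\varphi}a_\xi\big).
\]
Then I would commute $S_{\eta\varphi}$ to the right so that the bracket becomes $M_{\xi\psi}S_{\eta\varphi}-a_\xi^+T^-_{\eta\varphi\psi}+a_\psi^+T^-_{\eta\varphi\xi}$; since $T^-$ is symmetric in the pair of indices over which one antisymmetrizes, the two correction terms drop out, and expanding $\mathop{\text{asym}}_{[\xi\eta],[\varphi\psi]}\big(M_{\xi\psi}S_{\eta\varphi}\big)$ gives exactly the four terms on the right of \eqref{amcoxcom}.

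For the crossing relation \eqref{amcoxcross} I would mimic the proof in Section~\ref{sec:crossing}. With the normal order $\mathcal N$ (creation operators on the left) one has $M_{\xi\eta}M_{\varphi\psi}=\mathcal N(M_{\xi\eta}M_{\varphi\psi})+\mathop{\text{asym}}_{[\xi\eta],[\varphi\psi]}\big(a_\xi^+S_{\eta\varphi}a_\psi\big)$; since normally ordered operators commute and the classical angular momenta obey the crossing relation $\mathop{\text{cyclic}}_{(\xi\eta\varphi)}M^{\mathrm{cl}}_{\xi\eta}M^{\mathrm{cl}}_{\varphi\psi}=0$, summing over the cyclic permutations of $(\xi,\eta,\varphi)$ the $\mathcal N$-part vanishes. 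Commuting $S_{\eta\varphi}$ to the right once more (the $T^-$-correction again dies under the $[\varphi\psi]$-antisymmetrization) reduces the statement to the elementary identity
\[
\mathop{\text{cyclic}}_{(\xi\eta\varphi)}\ \mathop{\text{asym}}_{[\xi\eta],[\varphi\psi]}\big(a_\xi^+a_\psi\,S_{\eta\varphi}\big)=M_{\varphi\xi}S_{\eta\psi}+M_{\xi\eta}S_{\varphi\psi}+M_{\eta\varphi}S_{\xi\psi},
\]
which I would verify by direct expansion: the terms with annihilation index $\psi$ cancel in pairs using $S_{\eta\varphi}=S_{\varphi\eta}$, and the remaining six terms regroup into the three summands on the right, i.e.\ into the right-hand side of \eqref{amcoxcross}.

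The only genuinely new ingredient compared with the symmetric-group case — and the step I expect to require the most care — is the identification of the correction tensor $T^\pm_{\xi\eta\zeta}$ (which in type $A$ was encoded in the special relations \eqref{Sjjai}--\eqref{Sjjaj}) together with the observation that it is totally symmetric in its three arguments: this symmetry is exactly what renders all the ``$S$ past $a$'' corrections invisible to the antisymmetrizations on the right-hand sides, so that \eqref{amcoxcom} and \eqref{amcoxcross} collapse to the same normally ordered/classical identities used in the symmetric-group case. The remainder is the bilinear, $W$-equivariant bookkeeping already employed above.
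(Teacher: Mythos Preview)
Your proposal is correct and follows essentially the same approach as the paper: the paper's proof consists of the single remark that everything reduces to the identities $[S_{\varphi\eta},\nabla_\psi]=[S_{\psi\eta},\nabla_\varphi]$ and $[S_{\varphi\eta},(x,\psi)]=[S_{\psi\eta},(x,\varphi)]$, which is precisely your observation that the correction tensor $T^\pm_{\xi\eta\zeta}$ is totally symmetric in its three arguments. Your write-up is more detailed, carrying through the creation--annihilation bookkeeping of Proposition~\ref{prop1} and Section~\ref{sec:crossing} explicitly, but the key lemma and the mechanism by which it kills the unwanted terms are the same.
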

The proposition can be checked straightforwardly by making use of the formulas
$$
[S_{\varphi \eta},  \nabla_{\psi}] = [S_{\psi \eta},  \nabla_{\phi}],
\qquad
[S_{\varphi \eta},  (x, \psi)] = [S_{\psi \eta}, (x, \varphi)].
$$

Similarly to Theorem \ref{thmalg} for $W={\cal S}_N$, the algebra $H_g^{so(N)}(W)$ can be defined as the quotient of the smash product algebra $T(\Lambda^2V) \mathop{\#} W$. 
Let the element $M_{\xi \eta}$ correspond to $(x, \xi) \otimes (x, \eta)- (x, \eta) \otimes (x, \xi) \in \Lambda^2 V$. Then the
generating relations are given by \eqref{amcoxcom}, \eqref{amcoxcross}. The quotient has the basis given by monomials
\eqref{monom}, \eqref{order} as in Theorem \ref{thmalg},  where now $\sigma \in W$. Thus $H_g^{so(N)}(W)$ is a flat family of nonhomogeneous quadratic algebras over $\C W$ of  Poincar\'e--Birkhoff--Witt type.

Using the Dunkl operators \eqref{dunklgen}, a generalisation of the $\H_g so(N)$ invariant Hamiltonian \eqref{Hdunkl}
 can be defined as follows:
 \begin{equation}
 \label{HdunklW}
H=-\frac12\sum_{i=1}^N \nabla_{e_i}^2  =
-\frac12\sum_{i=1}^N \partial_{x_i}^2 + \sum_{\alpha\in {\mathcal R}_+}\frac{g_\alpha(g_\alpha-s_\alpha)}{2(\alpha,x)^2}.
\end{equation}
 Similarly to the case $W={\cal S}_N$ considered in  Section~\ref{sec:angular},
let $\text{Res} \, A$ be the restriction of a $W$-invariant element of the algebra $H_g^{so(N)}(W)$ to the space of $W$-invariant functions $\psi$:
 \begin{equation*}
\psi(s_\alpha x) = \psi(x) \qquad \forall \alpha \in {\mathcal R}.
\end{equation*}
 Then the Calogero-Moser Hamiltonian for a general Coxeter group $W$
 is
 \begin{equation}
 \label{Hdunkl-pmW}
H_+=\text{Res}\, H
=-\frac12 \sum_{i=1}^N \p_{x_i}^2 +\sum_{\a \in {\mathcal R}_+} \frac{ g_\a (g_\a-1)(\a,\a)}{2(\a,x)^2}.
\end{equation}
In the Hamiltonians \eqref{HdunklW}, \eqref{Hdunkl-pmW}, the radial and angular coordinates are separated giving rise to the
 relations  \eqref{Hspherical} and \eqref{Hspherical-pm}.  The angular Hamiltonians $H_\Omega$
and $H_{\Omega, +}$ can be expressed in terms of  the Dunkl angular
momentum operators \eqref{Mxi-eta}.
\begin{prop}
\label{angW}
The angular Calogero--Moser Hamiltonians \eqref{Hspherical} and \eqref{Hspherical-pm}
 can be obtained as
\begin{align}
\label{HomegaW}
H_\Omega & = -\frac12 \left(\mathbf{M}^2   - S(S-N+2)\right),
\\
H_{\Omega, +} & =  {\rm Res}\, H_\Omega,  
\nonumber
\end{align}
where we define the Dunkl angular momentum square and the symmetric group algebra invariant  operators,
respectively, as
\begin{equation*}
\label{CW}
\mathbf{M}^2=\sum_{i<j}^N M_{e_i e_j}^2,
\qquad
S=-\sum_{\a \in {\mathcal R}_+} g_\a s_\a.
\end{equation*}
\end{prop}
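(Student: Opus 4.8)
The plan is to prove this by mirroring the proof of Proposition~\ref{CMangD}. Write $x_i=(x,e_i)$, $\nabla_i=\nabla_{e_i}$ and $S_{ij}=S_{e_ie_j}$, so that $M_{e_ie_j}=x_i\nabla_j-x_j\nabla_i$ and, by the definitions \eqref{dunklgen} and \eqref{sksieta}, the commutation relations $[\nabla_i,x_j]=S_{ij}$, $[\nabla_i,\nabla_j]=0$, $[x_i,x_j]=0$ hold just as in \eqref{com-nabla}. The proposition then reduces to establishing the analogue of the operator identity \eqref{M2x2},
$$
\mathbf{M}^2=\mathbf{x}^2\boldsymbol{\nabla}^2-(\mathbf{x}\cdot\boldsymbol{\nabla})^2+(2S-N+2)(\mathbf{x}\cdot\boldsymbol{\nabla}),
$$
now read with $\mathbf{x}^2=\sum_ix_i^2$, $\boldsymbol{\nabla}^2=\sum_i\nabla_i^2$, $\mathbf{x}\cdot\boldsymbol{\nabla}=\sum_ix_i\nabla_i$ and $S=-\sum_{\alpha\in\mathcal R_+}g_\alpha s_\alpha$. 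Granting this identity, I would conclude exactly as for Proposition~\ref{CMangD}: substitute $\mathbf{x}^2=r^2$ and $\mathbf{x}\cdot\boldsymbol{\nabla}=r\partial_r+S$ (the singular part of $\sum_ix_i\nabla_i$ collapses since $\sum_i(\alpha,e_i)x_i=(\alpha,x)$), use that the orthogonal reflections $s_\alpha$ preserve $r=|x|$ so that $[S,r]=[S,\partial_r]=0$, and compare the outcome with the radial--angular separation \eqref{Hspherical} together with $H=-\frac12\boldsymbol{\nabla}^2$ from \eqref{HdunklW}, which yields \eqref{HomegaW}. Applying $\mathrm{Res}$ to \eqref{Hspherical} then gives $H_{\Omega,+}=\mathrm{Res}\,H_\Omega$, since restriction to $W$-invariant functions commutes with $r$ and $\partial_r$ and carries $H$ to $H_+$ by \eqref{Hdunkl-pmW}, so that \eqref{Hspherical} restricts to \eqref{Hspherical-pm}.

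To prove the displayed identity I would first reproduce the computation \eqref{calculation-start}: expanding $\mathbf{M}^2=\sum_{i,j}(x_i\nabla_jx_i\nabla_j-x_j\nabla_ix_i\nabla_j)$ and collecting the $S_{ij}$ created by the commutators uses only $[\nabla_i,x_j]=S_{ij}$, $[\nabla_i,\nabla_j]=0$, $[x_i,x_j]=0$ and nothing particular to ${\cal S}_N$, so it carries over word for word. This reduces matters to the two facts that replace $\sum_iS_{ii}=N-2S$ and \eqref{Sx'}:
$$
\sum_iS_{ii}=N-2S,\qquad \sum_i\{S_{li},x_i\}=2x_l\quad(1\le l\le N).
$$
Both follow directly from the defining formula \eqref{sksieta}. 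The first uses $\sum_i(\alpha,e_i)^2=(\alpha,\alpha)$, giving $\sum_iS_{ii}=N+2\sum_{\alpha\in\mathcal R_+}g_\alpha s_\alpha=N-2S$. For the second, since $s_\alpha x_i=(x,s_\alpha e_i)s_\alpha$ as operators and $\sum_i(\alpha,e_i)s_\alpha e_i=s_\alpha\alpha=-\alpha$, one gets $\sum_i(\alpha,e_i)\,s_\alpha x_i=-(\alpha,x)s_\alpha$, whereas $\sum_i(\alpha,e_i)\,x_is_\alpha=(\alpha,x)s_\alpha$ trivially; hence in $\sum_i\{S_{li},x_i\}$ every reflection contribution cancels against its partner, leaving $2x_l$. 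This is the Coxeter-invariant substitute for the type-$A$ chain \eqref{Sx}--\eqref{Sx'}. Feeding these two identities into \eqref{calculation-start} exactly as in the proof of Proposition~\ref{CMangD} then produces the displayed identity.

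I do not expect a genuine obstacle here: up to notation the argument is a transcription of the proof of Proposition~\ref{CMangD}, and the single place where type $A$ was used non-trivially --- the chain \eqref{Sx}--\eqref{Sx'} --- is now supplied by the reflection identities above, which are valid for every finite Coxeter system. The only point that needs mild care, exactly as in the type-$A$ proof, is keeping track of whether a given $S_{ij}$ sits to the left or to the right of the coordinates in the intermediate expressions; the two identities above are precisely what brings everything into the symmetric form of \eqref{M2x2}. (The parallel identity $\sum_i\{S_{li},\nabla_i\}=2\nabla_l$, obtained by the same reflection argument or from the $x\leftrightarrow\nabla$ antiautomorphism of $H_g(W)$, is not needed for this proposition.)
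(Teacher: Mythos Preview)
Your proposal is correct and is precisely the intended argument: the paper does not give a separate proof of Proposition~\ref{angW} but presents it as the direct Coxeter generalisation of Proposition~\ref{CMangD}, and your adaptation supplies exactly the two identities ($\sum_i S_{ii}=N-2S$ and $\sum_i\{S_{li},x_i\}=2x_l$) that replace the type-$A$ steps \eqref{Sx}--\eqref{Sx'}. Your verification of both from \eqref{sksieta} and of $\mathbf{x}\cdot\boldsymbol{\nabla}=r\partial_r+S$ is correct, so the remainder of the proof of Proposition~\ref{CMangD} transfers verbatim.
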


Generalisation of Theorem \ref{casthm} holds for any $W$ and it has the following form.
\begin{thm}\label{casthmW}
The centre $Z$ of the  algebra $H_g^{so(N)}(W)$ is generated by the angular Hamiltonian \eqref{HomegaW}
 and constants.
\end{thm}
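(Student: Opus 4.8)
The plan is to mimic closely the proof of Theorem \ref{casthm}, replacing the symmetric group ${\cal S}_N$ by the general Coxeter group $W$ throughout, and to reduce everything to a classical Poisson-commutant computation that does not depend on the particular reflection group. First I would establish that $H_\Omega$ given by \eqref{HomegaW} actually lies in $Z$, i.e. $[H_\Omega, M_{\xi \eta}] = 0$ and $[H_\Omega, w] = 0$ for all $w \in W$. The second relation is immediate since $S = -\sum_{\a} g_\a s_\a$ is a $W$-invariant element of $\C W$, hence central in $\C W$, and $\mathbf{M}^2 = \sum_{i<j} M_{e_i e_j}^2$ is manifestly $W$-invariant by the first relation in \eqref{com-Mw}. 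For the first relation I would argue exactly as in Section \ref{sec:centre}: the Dunkl Laplacian $H = -\frac12 \sum_i \nabla_{e_i}^2$ commutes with all $M_{\xi \eta}$ (the key input being $\sum_k \{S_{e_i e_k}, \nabla_{e_k}\} = 2 \nabla_{e_i}$, which is the $W$-analogue of \eqref{Sx'} and follows from \eqref{sksieta}), and similarly $r^2 = \mathbf{x}^2$ commutes with $M_{\xi \eta}$ by applying the involutive antiautomorphism of $H_g(W)$ exchanging $x_i \leftrightarrow \nabla_i$; combined with the separation of variables \eqref{Hspherical} and Proposition \ref{angW}, this gives $[H_\Omega, M_{\xi \eta}] = 0$.

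Next I would show $Z$ is generated by $H_\Omega$ and constants. Take an arbitrary $B \in Z$ and pass to its highest symbol $B_0 = Sym(B)$ in the smash product $\C[x,p] \mathop{\#} W$, using the PBW basis of $H_g^{so(N)}(W)$ from the $W$-version of Theorem \ref{thmalg}. As in the proof of Theorem \ref{casthm}, I would first argue that $B_0$ contains no nontrivial group element: if some term with $w \ne 1$ occurred in top degree $d$, then choosing $M_{e_i e_j}$ with $w(M_{e_i e_j}) \ne M_{e_i e_j}$ (possible since $W$ acts faithfully on $\Lambda^2 V$ for $N \ge 3$; the low-rank cases can be checked directly) would force $\deg Sym[B, M_{e_i e_j}] = d+1 \ne -\infty$, contradicting $B \in Z$. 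Hence $B_0 \in \C[x,p]$ is an ordinary polynomial, and the degree-$d$ group-element-free part of $Sym[B, M_{e_i e_j}]$ is the classical Poisson bracket of $B_0$ with the classical angular momentum $M_{e_i e_j}^{cl} = x_i p_j - x_j p_i$, which must vanish for all $i, j$. Then I invoke the classical fact — derivable from \cite{Weyl}, exactly as cited for Theorem \ref{casthm} — that the Poisson commutant of all the $M_{ij}^{cl}$ is generated by $\sum_{i<j}(M_{ij}^{cl})^2$; subtracting an appropriate scalar multiple of a power of $H_\Omega$ from $B$ strictly lowers the degree, and induction finishes the argument.

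The one genuinely new point, and the place I would expect the main obstacle, is the very first step of the symbol argument: verifying that for a general finite Coxeter group $W$ there is always, for each $1 \ne w \in W$, some basis angular momentum $M_{e_i e_j}$ not fixed by $w$ — equivalently, that $W$ acts faithfully on $\Lambda^2 V$. This holds whenever $\dim V = N \ge 3$ because $-1$ is the only scalar in any finite real reflection group acting on $\Lambda^2 V$ as the identity, but it genuinely fails for $N = 1$ and requires a moment's thought for rank-two systems (dihedral groups), where one checks $\Lambda^2 V$ is one-dimensional and $W$ acts on it through the determinant, so only $w$ with $\det w = 1$ act trivially there; these low-rank cases are handled by direct inspection of the (small) algebras. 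Everything else — the centrality of $H_\Omega$, the separation-of-variables identities, and the classical commutant computation — transcribes verbatim from Sections \ref{sec:angular} and \ref{sec:centre} with ${\cal S}_N$ replaced by $W$, $\delta_{ij}$ by $S_{e_i e_j}$, and the type-$A$ Dunkl operators by the general ones \eqref{dunklgen}.
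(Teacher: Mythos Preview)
Your overall strategy is exactly what the paper intends: Theorem~\ref{casthmW} is stated without separate proof, as a direct generalisation of Theorem~\ref{casthm}, so transcribing the argument of Section~\ref{sec:centre} with $\mathcal{S}_N$ replaced by $W$ is precisely the paper's implicit proof. The centrality of $H_\Omega$ via $[H,M_{\xi\eta}]=[r^2,M_{\xi\eta}]=0$, the passage to highest symbols, and the appeal to the classical Poisson commutant all go through verbatim as you describe.

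However, the subtlety you flag as ``the main obstacle'' is more serious than you allow, and your resolution of it is wrong. You assert that $W$ acts faithfully on $\Lambda^2 V$ whenever $N\ge 3$; this is false. The element $-1$ belongs to $W$ for $B_N$, $D_{2k}$, $E_7$, $E_8$, $F_4$, $H_3$, $H_4$ and $I_2(2m)$, and it acts trivially on $\Lambda^2 V$ in every rank. When $-1\in W$ it is genuinely central in $H_g^{so(N)}(W)$: it commutes with every $M_{\xi\eta}$ since $M_{-\xi,-\eta}=M_{\xi\eta}$, and it is central in $\C W$. Yet by the PBW basis the degree-zero part of $\C[H_\Omega]$ is only $\C\cdot 1$, so $-1$ is not a polynomial in $H_\Omega$. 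Thus the symbol argument, carried out honestly, only forces the top symbol $B_0$ to lie in $\C[x,p]\cdot 1 + \C[x,p]\cdot(-1)$, not in $\C[x,p]$ alone, and the induction yields that $Z$ is generated by $H_\Omega$ together with $\{1,-1\}\cap W$ rather than by $H_\Omega$ and scalars. This is not a defect of your transcription but a gap already present in the paper's statement; your proof sketch is otherwise faithful to the intended argument.
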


\bigskip

The $gl(N)$ version of the algebra $H_g^{gl(N)}(W)$ can also be defined for any $W$ as the algebra generated by
$$
E_{\eta \xi}= (x, \eta) \nabla_\xi
$$
and $\C W$, where $\xi, \eta \in V$ and the Dunkl operators are given by \eqref{dunklgen}.
 It can also be thought as a subalgebra of the rational Cherednik algebra $H_g(W)$, 
 which is isomorphic to the quotient of the smash product algebra $T(V\otimes V) \mathop{\#} \C W$ over the relations  \eqref{crosgl}, \eqref{relgln1}, 
 where the elements $S_{ij}=S_{e_i e_j}\in \C W$ are given by \eqref{sksieta}. 
 The set of monomials \eqref{basgln}, \eqref{sort}  with $\sigma \in W$ gives a basis of the algebra 
 for  any $W$-invariant multiplicity function $g$, and this is also a quadratic algebra over $\C W$ of PBW type.
The centre is generated by constants and
$$
\rho = \sum_{i=1}^N E_{e_i, e_i} +\sum_{\a\in{\mathcal R}_+} g_\a s_\a.
$$
In the representation of the algebra $E_{\xi \eta}\to \widetilde E_{\xi \eta} = \frac12((x,\xi)-\nabla_\xi)((x,\eta)+\nabla_\eta)$ the central generator takes the form of the Calogero-Moser operator in the harmonic confinement associated with $W$:
$$
\rho \to H +\frac12\mathbf{x}^2 -\frac{N}{2},
$$
where $H$ is given by \eqref{HdunklW}.

\section{Concluding remarks}

The considered algebras $H_g^{so(N)}(W)$, $H_g^{gl(N)}(W)$  have the form of the quotients of smash product algebras $T(U)\mathop{\#} \C W$, where $U=\Lambda^2V$ or $U=V\otimes V$, and $V$ is  the reflection representation of the Coxeter group $W$. In addition to the commutation relations the generators satisfy extra quadratic relations. These extra relations are needed in order to have algebras with the PBW property. It also means that we consider  the semidirect products of the quotients of the corresponding universal enveloping algebras $U(so(N))$, $U(gl(N))$ and $W$ rather than the products of the universal enveloping algebras themselves.

Further, for the associated graded algebras the right-hand sides of the extra quadratic relations \eqref{amcoxcross}, \eqref{crosgl} vanish. Thus we are dealing with Pl\"ucker relations and the affine cone $\cal C$ over the grassmanian of two-planes in the $so(N)$ case and  the space  $\cal M$ of matrices of rank at most one in the $gl(N)$ case. It might be interesting to study the algebras $H_g^{so(N)}(W)$, $H_g^{gl(N)}(W)$ and their representations further, as well as their $t=0$ versions in the notation of \cite{EG}, in particular, in connection to (Poisson) geometry and singularities of the quotient of the spaces $\cal C$, $\cal M$ over the natural action of the Coxeter group $W$. It would also be interesting to see whether there are algebras with good properties analogously associated to some other partial flag varieties or unions of coadjoint orbits.

A sheaf of algebras was associated to varieties with finite group actions in \cite{Eti}. In the case of the projective space with the natural action of the Coxeter group $W$  the algebra of global sections is isomorphic to a quotient of the algebra $H_g^{gl(N)}(W)$ over the ideal generated by a central element \cite{Eti} (see also \cite{BM}). Affinity of this sheaf of algebras was investigated in \cite{BM}. Another interesting question is about the extension of these results to the algebra $H_g^{so(N)}(W)$ and the sheaf of algebras associated with the subspace of isotropic lines in the projective space.


We also note that integrability of the angular Calogero--Moser systems deserves further analysis as, in particular, a complete set of Liouville  charges remains unknown. The problem of Liouville integrability was one of the main motivations for this work and we hope that the work is a useful step towards its solution. There are a few reasons why the problem is difficult and remains unsolved which we try to explain (see also \cite{hln}). Firstly, in comparison with the case of the Calogero--Moser systems in the linear space, commutativity of homogeneous quantum integrals for the latter problem is quite straightforward and can be derived from the property that the highest term of any quantum integral should be polynomial \cite{Ch} .

Secondly, one  way to try to establish Liouville integrability for the angular Hamiltonian $H_\Omega$ is to consider the natural embedding of the universal enveloping algebras
$$
U(so(2))\subset U(so(3))\subset \ldots \subset U(so(N)).
$$
 This chain immediately gives a complete set of commuting quantum integrals for the Laplace-Beltrami operator on the sphere which corresponds to $g=0$ case. The corresponding operators are the realisations of the Casimir elements ${\bf M}^2$. However, it is not clear how to extend this observation to $g\ne 0$ as the corresponding angular momentum operators $M_{ij}$ start depend on the parameter $n$ of the algebra $so(n)$ via the summation term of the Dunkl operator. So the algebras $H_g^{so(n)} (W)$ for different $n$ do not naturally embed one into another already for $W={\mathcal S}_n$.

Nonetheless, in the $gl(N)$ case a version of this construction exists at least for classical Coxeter groups $W$. In this case a correction $E_{ii}'$ of the operator $E_{ii}$ by group algebra elements is known so that $[E_{ii}', E_{jj}']=0$ \cite{DO}. It would be interesting to try to extend this construction to other finite reflection groups $W$ and to try to modify Casimir elements ${\bf M}^2$ to a commuting family. It is not plausible though that modification purely by group algebra elements is possible in the $so(N)$ case. A more general algebra at $g=0$ one can alternatively start with is the quantum shift of argument algebra (see \cite{R}) in which case one allows operators of order higher than 2.

In the case of $gl(N)$ and classical Coxeter groups $W$  it follows from \cite{poly92} (for $W=A_{N-1}$) and  \cite{F} (for $W=B_N, D_N$) that the algebra $H_g^{gl(N)}(W)$ contains a complete set of Liouville quantum integrals for the Calogero--Moser Hamiltonian in harmonic confinement.  Indeed, it is shown that a complete set of commuting quantum integrals is expressed via the combinations $a_i^+ a_i$. This suggests that the symmetry algebra $H_g^{so(N)}(W)$ might also be the right one for the angular Calogero--Moser Hamiltonian. We also note that the algebra and the above description of its centre has been recently explored in the study of Calogero--Moser deformation of the Coulomb problem and, in particular, for the construction of the generalisation of the Runge--Lenz vector \cite{Hak}.

\vspace{5mm}

\begin{acknowledgments}
\noindent
The authors are grateful to G. Bellamy, V. Dotsenko,  P. Etingof, E. Feigin,  D.~Karakhanyan, H.~Khudaverdian, O.~Lechtenfeld and A.~Nersessian
for stimulating discussions and useful comments.
The work was partially supported by the Royal Society/RFBR joint project JP101196/11-01-92612,
and by the VolkswagenStiftung  under Contract no. 86 260.
T.H. is  supported also by  the Armenian State Committee of Science grant no. 15RF-039,
as well as by ANSEF grant no. 3501.
\end{acknowledgments}

\end{document}